\newif\ifsimple
\title{Differential recursion and\\differentially algebraic functions%
 \thanks{Presented 
  at the Second Conference on
  Computability in Europe (CiE~2006), Swansea, Wales, UK, July~2006.
  Supported in part by Research Fellowship (DC1, 18-11700) 
  of the Japan Society for Promotion of Science
  while the author was at Tokyo Institute of Technology.}}
\author{Akitoshi~\textsc{Kawamura}%
 \\
 \large Department of Computer Science\\[-4pt]
 \large University of Toronto}
\date{}
\newtheorem{theorem}{Theorem}
\newtheorem{lemma}[theorem]{Lemma}
\newtheorem{claim}[theorem]{Claim}
\theoremstyle{definition}
\newtheorem{definition}[theorem]{Definition}
\newtheorem{principle}[theorem]{Principle}
\newtheorem*{openproblem}{Open Problem}
\renewcommand{\vec}{}
\newcommand{\Rset}{\mathbf R}
\newcommand{\Zset}{\mathbf Z}
\newcommand{\Nset}{\mathbf N}
\newcommand{\Analytic}{\mathrm C ^\omega}
\DeclareMathOperator{\Derivatives}{\mathbf D}
\DeclareMathOperator{\JX}{\textsc{jx}}
\DeclareMathOperator{\CM}{\textsc{cm}}
\DeclareMathOperator{\PR}{\textsc{dr}}
\DeclareMathOperator{\CampaPR}{\PR _{\mathrm C}}
\DeclareMathOperator{\MN}{\textsc{mn}}
\newcommand{\D}{\mathrm D}
\DeclareMathOperator{\dom}{dom}
\DeclareMathOperator{\arity}{arity}
\newcommand{\pcolon}{\mathrel{:\mskip3mu\joinrel\subseteq}}
\newcommand{\pto}{\rightarrow}
\newcommand{\Proj}[2]{\mathrm{id} _{#1} ^{#2 \to 1}}
\renewcommand{\d}{\mathrm{d}}
\newcommand{\ap}{\mkern1.5mu}
\def\theenumi{\alph{enumi}}
\def\labelenumi{\textup{(\theenumi)}}
\def\atmosttextstyle#1{\mathchoice{\textstyle#1}{\textstyle#1}{\scriptstyle#1}{\scriptscriptstyle#1}}
\def\notsobig#1{{
  \def\t@mp{#1}
  \xdef#1{\mathop{\atmosttextstyle{\t@mp}}}
}}
\begin{document}

\maketitle

\begin{abstract}
  Moore introduced a class of real-valued ``recursive'' functions
  by analogy with Kleene's formulation of the standard recursive functions.  
  While his concise definition
  inspired a new line of research on analog computation, 
  it contains some technical inaccuracies.  
  Focusing on his ``primitive recursive'' functions, 
  we pin down what is problematic and 
  discuss possible attempts to remove the 
  ambiguity regarding 
  the behavior of the differential recursion operator on partial functions.  
  It turns out that in any case the purported relation to 
  differentially algebraic functions, 
  and hence to Shannon's model of analog computation, fails.

\ifsimple\else
\smallskip

\noindent\textbf{Keywords:} 
  analog computation, 
  real recursive functions, 
  differentially algebraic functions, 
  transcendentally transcendental functions
\fi
\end{abstract}

\section{Introduction}
 \label{section: introduction}

  There are several different kinds of theoretical models that 
  talk about
  ``computability'' and ``complexity'' of real functions. 
\emph{Computable Analysis}~%
\cite{weihrauch00:_comput_analy} 
  and some other equivalent models 
  use approximation in one way or another 
  to bring real numbers 
  into the framework of
  the standard Computability Theory 
  that deals with discrete data in discrete time. 
  Another well-known model is the 
\emph{Blum--Shub--Smale model}~%
\cite{blum97:_compl_real_comput}
  in which continuous quantities are treated as an entity in themselves but 
  the machine still works with discrete clock ticks. 

  A third approach is 
\emph{analog computation}
  in which not only are the data real-valued, 
  but also the transition 
  takes place in continuous time~%
\cite{orponen97:_survey_of_contin_time_comput_theor}. 
  One of the oldest and the best-studied models of such computation is 
  Shannon's \emph{General Purpose Analog Computer}~%
\cite{shannon41:_mathem_theor_differ_analy}
  that models the \emph{Differential Analyzer}~%
\cite{bush31:_differ_analy}, 
  a computing device 
  built and put to use during the thirties through the fifties. 
  The GPAC, 
  after some refinements~%
\cite{pour-el74:_abstr_comput_relat_gener_purpos, 
      lipshitz87:_differ_algeb_replac_theor_analog_comput, 
      graca04:_some_gener_purpos_analog_comput}, 
  was shown capable of generating
  (in a sense) all and only the 
\emph{differentially algebraic} functions. 
  We will explore this class in 
  Section~\ref{section: differential algebraicity} 
  and show that 
  it can be characterized in many different ways. 

\ifsimple
\looseness=-1 
\fi
  Little is known about how such analog models 
  relate to the standard (digital) computability. 
  Moore~%
\cite{moore96:_recur_theor_reals_contin_comput}
  addressed this question
  for his new function classes
  that also try to express the power of GPAC-like computation. 
  In imitation of 
  Kleene's characterization of the usual recursive functions,
  these classes are defined as the closures under certain operators
  that are supposedly real-number versions of 
  primitive recursion and minimization. 
  He makes the following claims, among others, 
  that relate his classes of \emph{real primitive recursive} 
  and \emph{real recursive} functions
  to analog and digital computation, 
  respectively~%
\cite[Propositions 9 and 13]{moore96:_recur_theor_reals_contin_comput}. 

\begin{claim}
 \label{claim: primitive recursive}
  Real primitive recursive functions are 
  differentially algebraic.%
\footnote{%
  Moore writes $M _0$ for the class of real primitive recursion functions. 
  Claim~\ref{claim: primitive recursive} was later 
  replaced by a similar claim~%
\cite[Proposition~2]{campagnolo00:_iterat_inequal_and_differ_in_analog_comput}
  for a more ``restricted'' class~$\mathcal G$ than $M _0$, 
  but its definition is again unclear. 
}
\end{claim}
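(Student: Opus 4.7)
My plan is structural induction on the definition of the real primitive recursive functions: show that each of Moore's basic functions is differentially algebraic, and that the class of differentially algebraic functions is closed under composition and under the differential recursion operator. At each step I would draw on whichever characterization of differential algebraicity (to be established in Section~\ref{section: differential algebraicity}) is convenient; typically that a function of several variables is differentially algebraic iff it satisfies an algebraic ODE in each variable separately, with the others held fixed as parameters.

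The base case is routine, since the basic functions are polynomials and hence satisfy trivial algebraic ODEs. For composition, say $h(x) = f(g _1(x), \ldots, g _k(x))$ with $f$ and each $g _j$ differentially algebraic, I would differentiate $h$ repeatedly in a chosen variable $x _i$, expand each $\partial _{x _i} ^m h$ via the chain rule as a polynomial in finitely many partial derivatives of $f$ evaluated at $g(x)$ and partial derivatives of the $g _j$, and then use the ADEs satisfied by $f$ and each $g _j$ to eliminate those auxiliary quantities by standard differential-resultant computations, leaving an ADE purely in $x _i$, $h$, and the $\partial _{x _i} ^m h$.

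The substantial case is differential recursion, where $h$ is defined (where it converges) by $h(x, 0) = f(x)$ and $\partial _y h(x, y) = g(x, y, h(x, y))$ with $f$ and $g$ already differentially algebraic. To extract an ADE for $h$ in $y$, I would fix $x$ as a parameter and differentiate the defining equation repeatedly in $y$, writing each $\partial _y ^m h$ as a polynomial in $h$, in finitely many partial derivatives of $g$ evaluated at $(x, y, h)$, and in the lower-order $y$-derivatives of $h$ already introduced. I would then try to use the ADEs satisfied by $g$ to eliminate the partials of $g$ and be left with an ADE involving only $y$, $h$, and its $y$-derivatives. Differential algebraicity in each $x _i$-variable would be handled analogously, differentiating the defining ODE in $x _i$ and integrating along $y$.

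The main obstacle, and I suspect the point at which this line of argument actually breaks down, is exactly this elimination step. Differential algebraicity of $g$ as a function of $(x, y, z)$ gives an ADE in each of the variables $y$, $z$, $x _i$ separately, but there is no a priori algebraic compatibility between these ADEs, so the cascade of mixed partials $\partial _y ^a \partial _z ^b g(x, y, h)$ produced by differentiating need not lie in a finitely generated algebraic extension of the ring generated by $h$ and its $y$-derivatives. Integration over $y$ can moreover introduce genuinely transcendentally transcendental dependence on the parameters $x _i$. I would first verify the argument in the subcase where $g$ is polynomial in $z$, or outright polynomial, where classical ODE results do produce an ADE for $h$, and then see how much further the hypothesis on $g$ can be relaxed; in view of the abstract's negative conclusion, I would expect that the bare assumption of differential algebraicity of $g$ does not suffice to close the induction.
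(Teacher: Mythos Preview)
The claim is \emph{false}, and the paper does not prove it but refutes it by exhibiting an explicit counterexample. Your proposal is therefore on the wrong track: you are outlining an inductive proof of a false statement, and while you correctly sense that the induction will not close at the $\PR$ step, locating a gap in one proof strategy is not the same as disproving the claim.

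What the paper actually does is construct a concrete real primitive recursive function that is not differentially algebraic. Define $\Check\varGamma(R,x)=\int_{1/R}^{R}\exp\bigl((x-1)\ln t-t\bigr)\,\d t$ on $(0,\infty)^2$; this is easily built from the functions of Lemma~\ref{lemma: primitive recursive functions} and one application of $\PR$, so it is real primitive recursive. Suppose it were differentially algebraic. The crucial step is to invoke the \emph{global} characterization~(\ref{enumi: Z global}) of Theorem~\ref{theorem: 0607111001}: on the connected domain $(0,\infty)^2$ there is a \emph{single} nonzero polynomial $P$ with
\[
P\bigl(\Check\varGamma(R,x),\,\D^{(0,1)}\Check\varGamma(R,x),\,\ldots,\,\D^{(0,\arity P-1)}\Check\varGamma(R,x)\bigr)=0
\]
for all $(R,x)$. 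Since $\D^{(0,n)}\Check\varGamma(R,x)\to\D^{n}\varGamma(x)$ as $R\to\infty$, passing to the limit yields an algebraic differential equation for Euler's~$\varGamma$, contradicting H\"older's theorem that $\varGamma$ is transcendentally transcendental.

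Two remarks on your outline. First, your base case and composition step are in fact correct: the paper proves (Theorem~\ref{theorem: 0702161640}) that $\JX$ and $\CM$ preserve differential algebraicity, via the finite-transcendence-degree characterization of Lemma~\ref{lemma: 0702130051}. Second, your diagnosis of the obstacle is accurate in spirit---the failure occurs at $\PR$, and indeed the counterexample arises from a single integration of a differentially algebraic integrand---but the mechanism is not that mixed partials of $g$ escape a finitely generated extension; rather, integrating a perfectly tame $g$ with respect to a parameter can already produce, in the limit, a hypertranscendental function. The paper's contribution is precisely to supply that explicit witness and the uniform-in-$R$ ADE argument needed to transfer the contradiction from $\Check\varGamma$ to $\varGamma$.
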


\begin{claim}
 \label{claim: recursive}
  Each (partial) recursive function on the nonnegative integers 
  (in the standard sense)
  is a restriction of some real recursive function. 
\end{claim}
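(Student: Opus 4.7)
The plan is to proceed by structural induction on the usual Kleene-style definition of partial recursive functions, showing at each stage that every $f \colon \Nset^k \pto \Nset$ admits a real recursive $F$ whose restriction to $\dom(f)$ equals $f$. The base cases are immediate: zero, successor, and the projections on $\Nset$ have natural real extensions ($x \mapsto 0$, $x \mapsto x+1$, and the coordinate projections) already in Moore's basic stock, and closure under composition is built into the real recursive class. It therefore remains to simulate primitive recursion and unbounded minimization.

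For primitive recursion, assume inductively that $F$ and $G$ are real recursive extensions of $f$ and $g$, and let $h$ be defined by $h(x,0) = f(x)$ and $h(x, n+1) = g(x, n, h(x,n))$. I would build a real recursive $H$ with $H(x, n) = h(x, n)$ at each $n \in \Nset$ by feeding the differential recursion operator an initial value problem whose right-hand side ``pulses'' near integer values of $y$, driving $H(x, \cdot)$ from $h(x, n)$ to $h(x, n+1)$ over each unit interval. Concretely, I would use a smooth, nonnegative, period-one pulse $p$ built from $\sin$ and $\cos$, together with a smooth real recursive approximation $\tilde n(y)$ of the floor function that is exact at integers, and take the ODE $\partial H / \partial y = p(y) \bigl( G(x, \tilde n(y), H(x, y)) - H(x, y) \bigr)$ with $H(x, 0) = F(x)$; verifying that the solution lands on the correct successor value at each positive integer is then a routine ODE check.

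Minimization is handled by direct appeal to Moore's real minimization operator: given a real recursive $F$ extending $f$, the partial function $x \mapsto \inf \{ y \geq 0 : F(x, y) = 0 \}$ is real recursive and agrees with $\mu y\, [\, f(x, y) = 0 \,]$ at every integer input where the latter is defined.

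The hard part is the partiality issue that motivates the rest of the paper. When $f$ is undefined at some $x$, the ODE used to construct $H$ may fail to have the solution one wants, or, dually, may produce a value at points where $f$ is undefined, so that $F$ is not a restriction of $f$ in the intended sense. Making the inductive step rigorous therefore requires first committing to an unambiguous semantics for the differential recursion operator $\PR$ on partial inputs and then checking that, under that semantics, the construction preserves the invariant $F|_{\dom(f)} = f$. This is precisely the ambiguity flagged in Section~\ref{section: introduction}, so any honest proof of Claim~\ref{claim: recursive} must begin by fixing which interpretation of $\PR$ is in force.
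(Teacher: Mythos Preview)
Your route---structural induction on Kleene's schemes---is different from the paper's.  The paper does not simulate primitive recursion directly; it proves an \emph{iteration} lemma (Lemma~\ref{lemma: iteration preserves recursiveness}, proved in Appendix~\ref{section: appendix: iteration}) showing that from a real recursive $f$ one can build a real recursive $g$ with $g(\vec v,k)=f^k(\vec v)$ for positive integers~$k$, and then appeals to Moore's Turing-machine simulation.  The construction in that appendix is a two-component system: one component $g$ is pushed toward $f(h)$ on even half-intervals, and a second component $h$ is then dragged to match $g$ on odd half-intervals using $\overline{\mathrm{inv}}\circ\mathrm{zigzag}$ to force exact landing; the discontinuous functions $\mathrm{clk}$, $\mathrm{round}$, and $\overline{\mathrm{inv}}$ built from $\MN$ are essential.

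Your primitive-recursion step has a genuine gap.  The equation
\[
\partial H/\partial y \;=\; p(y)\,\bigl(G(x,\tilde n(y),H)-H\bigr)
\]
is a relaxation toward $G$: even if $G$ were constant on $[n,n+1)$, the solution is $H(y)=G+(H(n)-G)\exp\bigl(-\int_n^y p\bigr)$, which reaches $G$ exactly at $y=n+1$ only if $\int_n^{n+1} p=\infty$.  A smooth nonnegative period-one pulse cannot do that, so ``a routine ODE check'' does not go through.  Getting exact landing at integers is precisely the delicate point; the paper's answer is the two-variable trick above, where the singular factor $\overline{\mathrm{inv}}(\mathrm{zigzag}\,\tau)$ (real recursive only because $\MN$ is available) supplies the needed blow-up while keeping the integrand integrable.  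If you want to stay with structural induction, you will need a mechanism of comparable strength for the recursion step; the minimization step, by contrast, is fine as you wrote it.

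Your closing paragraph is on target: the paper likewise stresses that any justification of Claim~\ref{claim: recursive} hinges on first fixing the semantics of $\PR$ (and of $\MN$) on partial inputs, and Appendix~\ref{section: appendix: iteration} is written so as to be correct under any of the reasonable readings of $\MN$ discussed there.
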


  Despite its impact on the subsequent study on
  the classes and their variants~%
\cite{campagnolo01:_comput_compl_real_valued_recur, 
      campagnolo01:_upper_and_lower_bound_contin_time_comput, 
      graca04:_some_gener_purpos_analog_comput, 
      mycka04:_real_recur_funct_and_their_hierar, 
      bournez04:_analog_charac_of_elemen_comput}, 
  his work lacked formality in some ways, 
  as already pointed out~%
\cite{campagnolo00:_iterat_inequal_and_differ_in_analog_comput, 
      graca02:_gener_purpos_analog_comop_recur}. 
  In fact, the definition of the classes suffers from ambiguity. 
  In Section~\ref{section: primitive recursive functions} of this paper, 
  we reformulate Moore's theory 
  up to the real primitive recursive functions
  in a mathematically sound way. 
  With the aid of the preparation in 
  Section~\ref{section: differential algebraicity}, 
  we show that, 
  even though our formulation of the class
  seems the most restrictive possible, 
  Claim~\ref{claim: primitive recursive} fails. 
  In section~\ref{section: recursive functions}, 
  we discuss some issues about classes other than 
  the real primitive recursive functions, 
  including Claim~\ref{claim: recursive}. 

  Throughout the paper, 
  we write $\Nset$, $\Zset$, $\Rset$ for the sets of
  nonnegative integers (including $0$), 
  integers and 
  real numbers, respectively. 

\paragraph{Partial functions}

  In this paper, a 
\emph{function} 
  $f \pcolon \Rset ^m \pto \Rset ^n$
  may be \emph{partial}, as opposed to \emph{total}; 
  that is, 
  the set $\dom f$ of $x \in \Rset ^m$ for which 
  the value $f \ap x \in \Rset ^n$ is defined
  is allowed to be a proper subset of $\Rset ^m$. 
  By the 
\emph{restriction} 
  of $f$ to a set $J \subseteq \Rset ^m$ we mean 
  the function~$g$ with $\dom g = J \cap \dom f$ such that $g \ap x = f \ap x$ 
  for every $x \in \dom g$. 
  When $\dom f$ is open, 
  $f$ is said to be \emph{(real) analytic} 
  if 
  for every $a = (a _0, \dots, a _{m - 1}) \in \dom f$
  there are an open set $
J \subseteq \dom f
  $ containing $a$ 
  and a family $(c _p) _{p \in \Nset ^m}$ of $n$-tuples of real numbers
  such that the sum 
\begin{equation}
 \label{eq: 0702171515}
  \sum _{p = (p _0, \dots, p _{m - 1}) \in \Nset ^m}
    c _p
   \cdot 
    (x _0 - a _0) ^{p _0} \cdots (x _{m - 1} - a _{m - 1}) ^{p _{m - 1}}
\end{equation}
  converges to $f \ap x$ 
  for each $x = (x _0, \dots, x _{m - 1}) \in J$
  (regardless of the ordering of summation). 
  See Krantz and Parks~%
\cite[Chapters 1 and 2]{krantz02:_primer_of_real_analy_funct}
  for well-known properties of analytic functions. 
  When $f$ is analytic, 
  we write $\D ^{(a _0, \dots, a _{m - 1})} f$ 
  (and not $
 \partial ^{a _0 + \dots + a _{m - 1}} f 
/
 \partial t _0 ^{a _0} \cdots \partial t _{m - 1} ^{a _{m - 1}}
  $) for the mixed partial derivative of $f$ 
  of order $a _i$ along the $i$\textsuperscript{th} place 
  (which is known to exist). 

  Moore~%
\cite{moore96:_recur_theor_reals_contin_comput} 
  does not explicitly deal with partial functions. 
  We believe that this is responsible for 
  ambiguous and erroneous statements made in his seminal work
  as well as in some of the subsequent works by other authors. 
  Although there are some situations in mathematical analysis 
  where we can pretend that there are no partial functions 
  (namely, when we are only discussing
  properties defined \emph{locally}, 
  such as continuity or analyticity), 
  this is not the case with 
  the notions we want to discuss here. 
  If, say, the above Claim~\ref{claim: recursive} 
  is to make any nontrivial sense, 
  it is clearly inappropriate to talk about 
  ``real recursiveness at $x$,'' 
  as there is a real function 
  which is simple locally 
  but the restriction of which to $\Nset$ is 
  highly complicated in the recursion-theoretic sense.
  We therefore emphasize that
  partial functions must be dealt with seriously, 
  and devote this paper to 
  accordingly reformulating the theory wherever possible. 

\section{Differentially algebraic functions}
 \label{section: differential algebraicity}

  We show some facts about single- and multi-place 
  differentially algebraic functions. 

\begin{theorem}
 \label{theorem: 0607111001}
  Let $m$, $n$ and $i < m$ be nonnegative integers and 
  $f \pcolon \Rset ^m \pto \Rset ^n$ be an analytic function with open domain. 
  Let 
  \textup{(\ref{enumi: Z global})}, 
  \textup{(\ref{enumi: Z local})}, 
  \textup{(\ref{enumi: Z line})} and
  \textup{(\ref{enumi: Z point})} be 
  the following statements: 
\begin{enumerate}
\def\theenumi{\roman{enumi}}
\def\labelenumi{\textup{(\theenumi)}}
 \item \label{enumi: Z global}
  for any open connected set $J \subseteq \dom f$, 
  there is a $\Zset ^n$-coefficient nonzero polynomial~$P$
  such that 
\begin{equation}
 \label{eq: 0605282120}
 P \ap \bigl(
  f \ap x, 
  \D ^{e _i} f \ap x, 
  \D ^{2 \cdot e _i} f \ap x, 
  \ldots, 
  \D ^{(\arity P - 1) \cdot e _i} f \ap x
 \bigr) 
=
 0
\end{equation}
  for all $x \in J$, 
  where $e _i \in \Nset ^m$ is the vector whose 
  $i$\textsuperscript{th} component is $1$ and others are $0$; 
 \item \label{enumi: Z local}
  for each $x _0 \in \dom f$, 
  there are 
  a $\Zset ^n$-coefficient nonzero polynomial~$P$ and 
  an open set~$J$ containing $x _0$
  such that we have 
\eqref{eq: 0605282120}
  for all $x \in J$; 
 \item \label{enumi: Z line}
  for each $x _0 \in \dom f$, 
  there are 
  a $\Zset ^n$-coefficient nonzero polynomial~$P$ and 
  an open interval $J$ containing 
  the $i$\textsuperscript{th} component of $x _0$ 
  such that we have 
\eqref{eq: 0605282120}
  for all $x$ whose $i$\textsuperscript{th} component is in $J$ and 
  whose other components equal those of $x _0$; 
 \item \label{enumi: Z point}
  for each $x \in \dom f$, 
  there is a $\Zset ^n$-coefficient nonzero polynomial~$P$
  satisfying 
\eqref{eq: 0605282120}. 
\end{enumerate}
  Let 
  \textup{(\ref{enumi: Z global}$_\Rset$)}, 
  \textup{(\ref{enumi: Z local}$_\Rset$)} and 
  \textup{(\ref{enumi: Z line}$_\Rset$)}
  be the statements
  obtained by replacing $\Zset$ by $\Rset$ in 
  \textup{(\ref{enumi: Z global})},
  \textup{(\ref{enumi: Z local})} and 
  \textup{(\ref{enumi: Z line})},
  respectively. 
  Then 
  \textup{(\ref{enumi: Z global})}, 
  \textup{(\ref{enumi: Z local})}, 
  \textup{(\ref{enumi: Z line})}, 
  \textup{(\ref{enumi: Z point})}, 
  \textup{(\ref{enumi: Z global}$_\Rset$)}, 
  \textup{(\ref{enumi: Z local}$_\Rset$)} and 
  \textup{(\ref{enumi: Z line}$_\Rset$)} are equivalent. 
\end{theorem}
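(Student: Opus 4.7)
The plan is to close a cycle through all seven statements by using the trivial implications to reduce to two substantive steps. The chains \textup{(\ref{enumi: Z global})} $\Rightarrow$ \textup{(\ref{enumi: Z local})} $\Rightarrow$ \textup{(\ref{enumi: Z line})} $\Rightarrow$ \textup{(\ref{enumi: Z point})} and \textup{(\ref{enumi: Z global}$_\Rset$)} $\Rightarrow$ \textup{(\ref{enumi: Z local}$_\Rset$)} $\Rightarrow$ \textup{(\ref{enumi: Z line}$_\Rset$)} follow by shrinking the set on which the relation is required, and \textup{(\ref{enumi: Z global})} $\Rightarrow$ \textup{(\ref{enumi: Z global}$_\Rset$)} together with its two relatives follow from $\Zset\subseteq\Rset$. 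So it suffices to prove \textup{(\ref{enumi: Z point})} $\Rightarrow$ \textup{(\ref{enumi: Z global})} and \textup{(\ref{enumi: Z line}$_\Rset$)} $\Rightarrow$ \textup{(\ref{enumi: Z point})}.

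For the first, I would argue a connected component $C$ of $\dom f$ at a time (any open connected $J\subseteq\dom f$ sits inside such a $C$, and the same $P$ then also works on $J$). Enumerate the \emph{countable} set $\mathcal P$ of $\Zset^n$-coefficient nonzero polynomials and, for $P\in\mathcal P$, set
\[
Z_P=\bigl\{x\in C:P\ap\bigl(f\ap x,\D^{e_i}f\ap x,\ldots,\D^{(\arity P-1)e_i}f\ap x\bigr)=0\bigr\},
\]
the zero set in $C$ of the analytic function $h_P(x)$ obtained by composing $P$ with $f$ and its relevant derivatives. Statement \textup{(\ref{enumi: Z point})} asserts $C=\bigcup_{P\in\mathcal P}Z_P$; since $C$ is a non-empty open subset of $\Rset^m$ and hence a Baire space, some $Z_P$ has non-empty interior. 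Then $h_P$ is analytic on the connected set $C$ and vanishes on a non-empty open subset, so $h_P\equiv 0$ on all of $C$ by analytic continuation, giving \textup{(\ref{enumi: Z global})}.

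For \textup{(\ref{enumi: Z line}$_\Rset$)} $\Rightarrow$ \textup{(\ref{enumi: Z point})}, I would restrict to the $e_i$-line through $x_0$ to reduce to a one-variable problem and then convert $\Rset$-coefficients into $\Zset$-coefficients via a differential Wronskian. Given $x_0$ and a witness $P$ from \textup{(\ref{enumi: Z line}$_\Rset$)}, set $g(t)=f(x_0+t\ap e_i)$, an analytic one-variable function satisfying $P(g,g',\ldots,g^{(\arity P-1)})\equiv 0$ near $t=0$. Pick a $\Qset$-basis $C_0,\ldots,C_{r-1}$ of the $\Qset$-span of the coefficients of $P$, taking the $C_j$'s from among those coefficients themselves, so that the expansion $P=\sum_jC_j\ap P_j$ with $P_j\in\Qset[y_0,\ldots,y_{\arity P-1}]$ automatically makes the $P_j$'s both nonzero and $\Qset$-linearly independent. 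With $g_j=P_j(g,g',\ldots)$, we have $\sum_jC_j\ap g_j\equiv 0$. If some $g_j\equiv 0$, the nonzero polynomial $P_j$ (cleared of denominators) itself witnesses \textup{(\ref{enumi: Z point})} at $x_0$; otherwise the $g_j$ are $\Rset$-linearly dependent analytic functions on an interval, so their Wronskian $W=\det(g_j^{(k)})_{0\le j,k<r}$ vanishes identically.

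The crux is to show that $W$ is the image under evaluation of a \emph{nonzero} polynomial. With the formal derivation $\mathsf D=\sum_\ell y_{\ell+1}\ap\partial/\partial y_\ell$ on $\Qset[y_0,y_1,\ldots]$, one has $g_j^{(k)}=(\mathsf D^kP_j)(g,g',\ldots)$, so $W=Q(g,g',\ldots,g^{(\arity P-2+r)})$ for the \emph{differential Wronskian} $Q=\det(\mathsf D^kP_j)_{j,k}\in\Qset[y_0,\ldots,y_{\arity P-2+r}]$. The classical Wronskian criterion over the differential field $\Qset(y_0,y_1,\ldots)$, whose field of constants is exactly $\Qset$, gives $Q\ne 0$ as a polynomial precisely because $P_0,\ldots,P_{r-1}$ are $\Qset$-linearly independent. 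This formal nonvanishing of $Q$, as opposed to the easy vanishing of $W$ as an analytic function, is the main obstacle of the proof; once it is in hand, clearing denominators in $Q$ produces the $\Zset$-coefficient polynomial required by \textup{(\ref{enumi: Z point})} at $x_0$.
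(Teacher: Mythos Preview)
Your proof is correct and follows essentially the same route as the paper: the trivial implications reduce everything to (\ref{enumi: Z point})~$\Rightarrow$~(\ref{enumi: Z global}) via Baire category plus the Identity Theorem, and (\ref{enumi: Z line}$_\Rset$)~$\Rightarrow$~(\ref{enumi: Z point}) via a Wronskian argument (which the paper cites as the theorem of Ritt and Gourin). The only cosmetic difference is that the paper takes the Wronskian of the \emph{monomials} appearing in $P$, whereas you first decompose $P$ over a $\Qset$-basis of its coefficients and take the Wronskian of the resulting $\Qset$-coefficient pieces $P_j$; your justification of the formal nonvanishing of $Q$ via the constant field of $\bigl(\Qset(y_0,y_1,\ldots),\mathsf D\bigr)$ being exactly $\Qset$ is in fact a cleaner version of the paper's one-line ``otherwise the monomials would be linearly dependent for arbitrary $f$, which is absurd.''
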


\begin{proof}
  The implications
  (\ref{enumi: Z global})~$\Rightarrow$ 
  (\ref{enumi: Z local})~$\Rightarrow$ 
  (\ref{enumi: Z line})~$\Rightarrow$ 
  (\ref{enumi: Z point})
  and 
  (\ref{enumi: Z global})~$\Rightarrow$ 
  (\ref{enumi: Z global}$_\Rset$)~$\Rightarrow$ 
  (\ref{enumi: Z local}$_\Rset$)~$\Rightarrow$ 
  (\ref{enumi: Z line}$_\Rset$) are obvious.  
  It has been known that
  (\ref{enumi: Z line}$_\Rset$) $\Rightarrow$ (\ref{enumi: Z line}),
  see Theorem~\ref{theorem: ritt and gourin}. 
  To see (\ref{enumi: Z point})~$\Rightarrow$ (\ref{enumi: Z global}), 
  consider, 
  for each $\Zset ^n$-coefficient polynomial~$P$, 
  the set~$J _P$ of all $
x \in J
  $ satisfying \eqref{eq: 0605282120}. 
  Since by (\ref{enumi: Z point}) 
  these countably many closed sets~$J _P$ cover 
  the open set $J$, 
  one of them must have nonempty interior 
  by Baire Category Theorem~\ref{theorem: baire}. 
  This $J _P$ must then 
  equal $J$ by the Identity Theorem~\ref{theorem: analytic extension}. 
\end{proof}

\begin{definition}
  Let $m$ and $n$ be nonnegative integers. 
  An analytic function $f \pcolon \Rset ^m \pto \Rset ^n$ is 
\emph{differentially algebraic}%
\footnote{%
  Also termed 
  \emph{algebraic transcendental} or
  \emph{hypotranscendental}. 
  Functions \emph{without} this property is said to be
  \emph{transcendentally transcendental} or \emph{hypertranscendental}. 
}
  if for each $i < m$ 
  it satisfies one (or all)
  of the clauses in Theorem~%
\ref{theorem: 0607111001}. 
\end{definition}

  Note that $f$ need not be the \emph{unique} solution of 
\eqref{eq: 0605282120}. 
  For example, every function (with open domain) 
  that is constant on each connected component of its domain is
  differentially algebraic because of 
  the single set of equations $\D ^{e _i} f \ap x = 0$. 

  The clauses~(\ref{enumi: Z local})--(\ref{enumi: Z point}) show that
  being differentially algebraic is a ``local'' property. 

  When $\dom f$ is connected, (\ref{enumi: Z global}) reduces to
  the following statement: 
\begin{enumerate}
 \item[(\ref{enumi: Z global}$'$)]
  there is a $\Zset ^n$-coefficient nonzero polynomial~$P$
  such that we have \eqref{eq: 0605282120} for all $x \in \dom f$. 
\end{enumerate}
  Hence, the clause~(\ref{enumi: Z line}) shows that, 
  as long as $\dom f$ is connected, 
  our definition is equivalent to that of many authors, 
  including Moore~%
\cite{moore96:_recur_theor_reals_contin_comput}, 
  who first state the definition for $m = 1$ by 
  (\ref{enumi: Z global}$'$) and then extend it to $m > 1$ by saying that
  a function is differentially algebraic when 
  it is so as
  a unary function of each argument 
  when all other arguments are held fixed.%
\footnote{%
  Their definition for the case $m = 1$ is 
  slightly different from (\ref{enumi: Z global}$'$)
  in that it replaces \eqref{eq: 0605282120} by $
 P \ap (
  x, 
  f \ap x, 
  \D ^{e _i} f \ap x, 
  \ldots, \allowbreak
  \D ^{(\arity P - 2) \cdot e _i} f \ap x
 ) 
=
 0
  $. 
  But the proof of 
  (\ref{enumi: 0702121119})~$\Rightarrow$
  (\ref{enumi: 0702121120}) of
  Lemma~\ref{lemma: 0702130051} shows that 
  this difference is superficial. 
}
  We proved Theorem~\ref{theorem: 0607111001} 
  in order to use (\ref{enumi: Z global}) 
  to present 
  a counterexample to Claim~\ref{claim: primitive recursive}
  later. 

  Let us characterize differentially algebraic functions
  in yet another way for the case $n = 1$. 
  For a field $E$, its subfield $F$ and a set $B \subseteq E$, 
  we write $F (B)$, agreeing tacitly on $E$, 
  for the smallest subfield of $E$ that includes $F$ and $B$. 
  We write $\overline F$ for the 
\emph{algebraic closure}
  of $F$, that is, 
  the set of those elements of $E$ that annuls some
  $F$-coefficient unary nonzero polynomial. 

  Let $J \subseteq \Rset ^m$ be an open set and 
  consider
  the ring $\Analytic [J]$ of 
  analytic functions $g \pcolon \Rset ^m \pto \Rset$ 
  with $\dom g = J$. 
  Note that
  $\Rset$ is embedded into this ring by
  regarding each $x \in \Rset$ as the constant function 
  taking the value~$x$. 
  To assert
\eqref{eq: 0605282120}
  for all $x \in J$ 
  is to say that
\begin{equation}
 \label{eq: 0605282121}
 P \ap \bigl(
  f, 
  \D ^{e _i} f, 
  \D ^{2 \cdot e _i} f, 
  \ldots, 
  \D ^{(\arity P - 1) \cdot e _i} f
 \bigr) 
=
 0
\end{equation} 
  in $\Analytic [J]$. 
  If $J$ is connected, 
  $\Analytic [J]$ has 
  a quotient field 
  by the Identity Theorem~\ref{theorem: analytic extension}, 
  so the notation $
\Rset (\Derivatives f)
  $ in the following lemma makes sense. 
  We write $\Derivatives f = \{\, \D ^a f \mid a \in \Nset ^m \,\}$. 

\begin{lemma}
 \label{lemma: 0702130051}
  Let $J \subseteq \Rset ^m$ be 
  open and connected. 
  For $f \in \Analytic [J]$, 
  the following are equivalent: 
\begin{enumerate}
 \item \label{enumi: 0702121119}
  $f$ is differentially algebraic; 
 \item \label{enumi: 0702121120}
  $\Derivatives f \subseteq \Rset (B)$ for 
  some finite set $B \subseteq \Derivatives f$; 
 \item \label{enumi: 0702121122}
  $\Derivatives f \subseteq \overline{\Rset (B)}$ for 
  some finite set $B \subseteq \Rset (\Derivatives f)$.
\end{enumerate}
\end{lemma}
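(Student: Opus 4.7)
I would prove (\ref{enumi: 0702121119})~$\Rightarrow$~(\ref{enumi: 0702121120})~$\Rightarrow$~(\ref{enumi: 0702121122})~$\Rightarrow$~(\ref{enumi: 0702121119}). The middle implication is immediate from the inclusion $\Rset(B) \subseteq \overline{\Rset(B)}$.

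For (\ref{enumi: 0702121122})~$\Rightarrow$~(\ref{enumi: 0702121119}), I would first replace $B$ by a finite subset $B' \subseteq \Derivatives f$ with $\Rset(B) \subseteq \Rset(B')$: this is possible because each element of $B \subseteq \Rset(\Derivatives f)$ is a rational function in finitely many derivatives of~$f$. Then $\overline{\Rset(B)} \subseteq \overline{\Rset(B')}$ has transcendence degree $d \leq |B'|$ over~$\Rset$. Fixing any $i < m$, the $d+1$ elements $f, \D^{e_i}f, \ldots, \D^{d \cdot e_i}f$ of $\overline{\Rset(B')}$ are algebraically dependent over~$\Rset$, yielding a nonzero $\Rset$-coefficient polynomial~$P$ annihilating them in $\Analytic[J]$ and, since $J$ is connected, satisfying \eqref{eq: 0605282120} on all of~$J$. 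This is clause~(\ref{enumi: Z global}$_\Rset$) of Theorem~\ref{theorem: 0607111001}, which the theorem upgrades to the $\Zset$-coefficient clause~(\ref{enumi: Z global}), so $f$ is differentially algebraic.

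The interesting direction is (\ref{enumi: 0702121119})~$\Rightarrow$~(\ref{enumi: 0702121120}). For each $i < m$ I would pick a $\Zset$-coefficient polynomial $P_i$ satisfying \eqref{eq: 0605282121}, chosen to minimize first $\arity P_i$, say $\arity P_i = k_i + 1$, and then the total degree in its last variable $y_{k_i}$. A standard minimality argument then forces $(\partial P_i / \partial y_{k_i})(f, \D^{e_i}f, \ldots, \D^{k_i e_i}f)$ to be a nonzero element of the integral domain $\Analytic[J]$, hence invertible in its fraction field. I propose the finite set
\[
B := \{\D^a f : a \in \Nset^m \text{ and } a_i \leq k_i \text{ for every } i < m\}.
\]
For any $a \in \Nset^m$ with $a_i > k_i$ for some $i$, applying $\D^{a - k_i e_i}$ to \eqref{eq: 0605282121} and expanding via the multivariable Leibniz rule yields an identity
\[
(\partial P_i / \partial y_{k_i})(f, \D^{e_i}f, \ldots, \D^{k_i e_i}f) \cdot \D^a f \;=\; R_a
\]
in $\Analytic[J]$, where $R_a$ is a polynomial expression in derivatives $\D^{a'}f$ with $|a'| < |a|$: among all Leibniz terms, only the one where the whole of $\D^{a - k_i e_i}$ falls on a single factor of $\D^{k_i e_i}f$ produces $\D^a f$, while every other term has each of its factors $\D^{a'}f$ satisfying $a' \leq a$ componentwise with $a' \neq a$. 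Strong induction on $|a|$ then places every $\D^a f$ in $\Rset(B)$, yielding~(\ref{enumi: 0702121120}).

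The main obstacle will be the bookkeeping around the Leibniz expansion in this last step: one has to verify that in every non-leading term each factor has total order strictly below $|a|$, and that the leading coefficient collapses exactly to $(\partial P_i/\partial y_{k_i})(\cdots)$, which amounts to working carefully with multi-index multinomial coefficients. A subsidiary point is justifying non-vanishing of $(\partial P_i/\partial y_{k_i})(\cdots)$ as an element of $\Analytic[J]$---not merely at a point---which follows from the minimality of~$P_i$ together with the Identity Theorem on the connected domain~$J$.
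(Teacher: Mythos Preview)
Your proposal is correct and follows essentially the same route as the paper: the same minimality choice of $P_i$, the same finite set $B=\{\D^a f : a_i\le k_i\ \text{for all }i\}$, and the same ``differentiate \eqref{eq: 0605282359} by $\D^{a-k_i e_i}$ and solve for $\D^a f$'' induction (the paper orders the induction by the componentwise partial order on $\Nset^m$ rather than by $|a|$, but your observation that every non-leading factor satisfies $a'\le a$, $a'\neq a$ makes the two orderings interchangeable). For (\ref{enumi: 0702121122})~$\Rightarrow$~(\ref{enumi: 0702121119}) the paper quotes its Transcendence Degree Theorem (a Steinitz exchange), whereas you shortcut this by using the hypothesis $B\subseteq\Rset(\Derivatives f)$ to replace $B$ directly by a finite $B'\subseteq\Derivatives f$; that is a small simplification but the underlying transcendence-degree idea is identical.
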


\begin{proof}
  The implication (\ref{enumi: 0702121120})~$\Rightarrow$
  (\ref{enumi: 0702121122}) is trivial. 
  The Transcendence Degree Theorem~\ref{theorem: dimension}
  shows (\ref{enumi: 0702121122})~$\Rightarrow$ 
  (\ref{enumi: 0702121119}). 
  For (\ref{enumi: 0702121119})~$\Rightarrow$
  (\ref{enumi: 0702121120}), 
  assume that for each $i$ we have an $\Rset$-coefficient 
  polynomial~$P _i$ with
\begin{equation}
 \label{eq: 0605282359}
  P _i \ap (
   f, 
   \D ^{e _i} f, 
   \D ^{2 \cdot e _i} f, 
   \ldots, 
   \D ^{N _i \cdot e _i} f
  ) 
 =
  0, 
\end{equation}
  where $N _i = \arity P _i - 1$. 
  By choosing $N _i$ to be smallest and 
  then the degree of $P _i$ in the last place to be smallest, 
  we may assume that 
\begin{equation}
  \varXi
 = 
  (\D ^{(0, \dots, 0, 1)} P _i) \ap (
   f, 
   \D ^{e _i} f, 
   \D ^{2 \cdot e _i} f, 
   \ldots, 
   \D ^{N _i \cdot e _i} f
  )
\end{equation}
  is nonzero. 
  Consider the order~$\leq$ on $\Nset ^m$ defined by
  setting $a \leq b$ when $a + c = b$ for some $c \in \Nset ^m$. 
  We will show by induction on $a \in \Nset ^m$ that
  $\D ^a f \in \Rset (\{\, \D ^b f \mid b \leq (N _0, \dots, N _{m - 1})\,\})$.
  The case $a \leq (N _0, \dots, N _{m - 1})$ being trivial, 
  assume that
  $a \geq (N _i + 1) \cdot e _i$ for some $i$. 
  Apply $\D ^{a - N _i \cdot e _i}$ to both sides of 
  \eqref{eq: 0605282359} 
  and calculate using chain rules
  to obtain
\begin{equation}
   \varPsi + \varXi \cdot \D ^a f = 0, 
\end{equation}
  where $\varPsi$ can be written as
  a sum of products of 
  several derivatives of $f$ of order $\leq a$ and $\neq a$, 
  which hence enjoy the induction hypothesis. 
\end{proof}

  Apart from purely theoretical interest, 
  the significance of differentially algebraic functions lies in
  their relation to the 
\emph{General Purpose Analog Computer}, 
  an analog computation model introduced by Shannon~%
\cite{shannon41:_mathem_theor_differ_analy} and 
  later refined by Pour-El~%
\cite{pour-el74:_abstr_comput_relat_gener_purpos}. 
  More precisely, 
  if a function $f \pcolon \Rset \pto \Rset$ with nonempty domain 
  is differentially algebraic, 
  then the restriction of $f$ to some nonempty subset of $\dom f$ 
  is GPAC generable~%
\cite[Theorem~4]{pour-el74:_abstr_comput_relat_gener_purpos}; 
  conversely, if a function $f \pcolon \Rset \pto \Rset$ with nonempty domain 
  is GPAC generable, 
  then the restriction of $f$ to some nonempty subset of $\dom f$ 
  is differentially algebraic~%
\cite[Theorem~2]{lipshitz87:_differ_algeb_replac_theor_analog_comput}. 
  Gra\c ca later 
  considered the
\emph{Polynomial GPAC}, 
  a simpler refinement than Pour-El's, 
  and proved analogous results~%
\cite{graca04:_some_gener_purpos_analog_comput}. 

\section{Real primitive recursive functions}
 \label{section: primitive recursive functions}

  The class of real primitive recursive functions is defined~%
\cite{moore96:_recur_theor_reals_contin_comput} 
  as the smallest class containing some basic functions 
  and closed under the operators specified below. 
  Unfortunately, 
  the original definition contains ambiguity, 
  resulting in some inconsistent claims about the class.
  To remedy this, we shall revisit the definitions carefully 
  in Sections \ref{subsection: jx and cm}
  and \ref{subsection: differential recursion}. 
  Section~\ref{subsection: campagnolo} discusses 
  an alternative approach by Campagnolo. 
  Section~\ref{subsection: counterexample} disproves
  Claim~\ref{claim: primitive recursive}. 

\subsection{Two basic operators}
 \label{subsection: jx and cm}

  The real primitive recursive functions 
  are defined through three operators: 
  \emph{juxtaposition}, \emph{composition} and 
  \emph{differential recursion}. 
  The first two are very simple. 

\begin{definition}
 \label{definition: juxtaposition and composition}
  Given 
  $g _0$, \ldots, $g _{n - 1} \pcolon \Rset ^m \pto \Rset$, 
  define their juxtaposition $
\JX {(g _0, \dots, g _{n - 1})} \pcolon \Rset ^m \pto \Rset ^n
  $ by setting $
\JX {(g _0, \dots, g _{n - 1})} \ap \vec x = 
 (g _0 \ap \vec x, \dots, g _{n - 1} \ap \vec x)
  $ whenever $\vec x \in \dom g _0 \cap \dots \cap \dom g _{n - 1}$. 
  Given $f \pcolon \Rset ^m \pto \Rset ^n$ and 
  $g \pcolon \Rset ^l \pto \Rset ^m$, 
  define their composition $\CM {(f, g)} \pcolon \Rset ^l \pto \Rset ^n$ by 
  setting $
\CM {(f, g)} \ap \vec x = f \ap (g \ap \vec x)
  $
  whenever $\vec x \in \dom g$ and $g \ap x \in \dom f$. 
\end{definition}

  We write $f \circ g$ for $
\CM {(f, g)}
  $. 
  As remarked in Section~\ref{section: introduction}, 
  it is important to define precisely what the operators do 
  on partial functions. 
  Note how Definition~\ref{definition: juxtaposition and composition}
  specifies the domain of the functions constructed. 
  If $g \ap x$ is not defined, 
  neither is $(f \circ g) \ap x$, 
  even if, say, $f$ is a constant function defined everywhere. 
  We thus work in the 
  following (informal) general principle. 

\begin{principle}
 \label{principle: call-by-value}
  For the value of an expression to be defined, 
  the value of each of its subexpressions has to be defined. 
\end{principle}

  We remark that this was not explicitly intended by Moore. 
  In fact, he presents an example to the contrary 
  when he claims~%
\cite[Section~6]{moore96:_recur_theor_reals_contin_comput} 
  that 
  the total function $\overline{\mathrm{inv}} \pcolon \Rset \pto \Rset$
  given by
\begin{equation}
 \label{eq: 0607150051}
  \overline{\mathrm{inv}} \ap x = 
\begin{cases}
 0     & \text{if} \ x = 0 \\
 1 / x & \text{if} \ x \neq 0
\end{cases}
\end{equation}
  can be obtained by composing 
  the binary multiplication with
  $\JX {(\mathrm{zero?}, g)}$, where 
  $\mathrm{zero?}$ is from \eqref{eq: zero test}
  and $g$ is 
  the restriction of $\overline{\mathrm{inv}}$ to $\Rset \setminus \{0\}$.
  Some authors point this out~%
\cite[p.~22]{campagnolo01:_comput_compl_real_valued_recur} 
  and criticize it~%
\cite[p.~47]{graca02:_gener_purpos_analog_comop_recur}. 
  Without discussing which definition is more ``natural,''
  we adopt our restrictive Definition~%
\ref{definition: juxtaposition and composition}, 
  simply because it is not clear
  how to formulate a general definition
  that would admit this construction of $\overline{\mathrm{inv}}$. 

  The operators~$\JX$ and $\CM$ preserve
  analyticity~%
\cite[Proposition~2.2.8]{krantz02:_primer_of_real_analy_funct}. 

\begin{theorem}
 \label{theorem: 0702161640}
  The property of being differentially algebraic 
  is preserved by $\JX$ and $\CM$. 
\end{theorem}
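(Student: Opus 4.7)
The plan is to reduce both parts to the algebraic characterization in Lemma~\ref{lemma: 0702130051}, exploiting that, by Theorem~\ref{theorem: 0607111001}, differential algebraicity is a local property so we may restrict to any small connected open set where everything in sight is analytic. A useful preliminary observation is that a vector-valued analytic function is differentially algebraic iff each of its scalar components is: the $\Rightarrow$ direction is immediate by reading off the $j$-th coordinate of a $\Zset^n$-coefficient vector polynomial relation, and the $\Leftarrow$ direction is obtained by packaging the scalar $\Zset$-coefficient polynomials for the components into a componentwise-defined $\Zset^n$-coefficient vector polynomial.

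For juxtaposition, the preliminary observation essentially finishes the argument: fixing $i$, each diff. alg. $g_j$ supplies a nonzero $\Zset$-coefficient scalar polynomial $P_{i,j}$ annihilating $(g_j,\D^{e_i}g_j,\D^{2e_i}g_j,\ldots)$; assembling the $P_{i,j}$ into a single nonzero $\Zset^n$-coefficient polynomial $Q_i$ (whose $j$-th coordinate feeds $P_{i,j}$ the $j$-th scalar components of the vector arguments) witnesses clause~(\ref{enumi: Z global}) of Theorem~\ref{theorem: 0607111001} for $\JX(g_0,\ldots,g_{n-1})$.

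For composition, by the preliminary observation it suffices to show that $f\circ g$ is scalar differentially algebraic whenever $f$ is scalar differentially algebraic and every scalar component $g_j$ of $g$ is differentially algebraic. Working on a small connected open neighborhood, Lemma~\ref{lemma: 0702130051}(\ref{enumi: 0702121120}) applied to $f$ places every element of $\Derivatives f$ inside some finitely generated field $\Rset(B_f)$ with $B_f\subseteq\Derivatives f$; composing with $g$, every $\D^a f\circ g$ lies in $\Rset(B_f\circ g)$. Applying the same lemma to each $g_j$ and taking a union, every element of $\bigcup_j \Derivatives g_j$ lies in some finitely generated $\Rset(B_g)$. Fa\`a di Bruno's formula expresses each $\D^a(f\circ g)$ as a $\Zset$-coefficient polynomial in various $\D^{a'}f\circ g$ and $\D^{a''}g_j$, so $\Derivatives(f\circ g)\subseteq\Rset(B_f\circ g\cup B_g)$, a finitely generated and hence finite-transcendence-degree extension of $\Rset$. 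A finite transcendence basis drawn from $\Derivatives(f\circ g)$ itself then verifies clause~(\ref{enumi: 0702121122}) of Lemma~\ref{lemma: 0702130051}, establishing differential algebraicity of $f\circ g$.

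The main obstacle is conceptual rather than computational: one must be sure that the chain rule for mixed higher partial derivatives does not demand any derivatives of $f$ that escape the algebraic hull granted by diff. alg. of $f$. This is precisely where Lemma~\ref{lemma: 0702130051}(\ref{enumi: 0702121120}) earns its keep, since it controls \emph{all} of $\Derivatives f$ through a single finite set, not merely the derivatives along a fixed axis; beyond this, no explicit combinatorial expansion of Fa\`a di Bruno is needed, only its polynomial form.
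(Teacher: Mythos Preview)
Your argument follows the paper's almost exactly: reduce to scalar $f$, use the chain rule to place $\Derivatives(f\circ g)$ inside the field generated by $\{\,\D^a f\circ g\,\}$ and the (restricted) $\Derivatives g_j$, and finish via Lemma~\ref{lemma: 0702130051}. The paper states the last step as ``\eqref{eq: 0702130056} stays unchanged by replacing $\Derivatives f$ by $A$ and $\Derivatives g_i$ by $B_i$,'' which is your ``every $\D^a f\circ g$ lies in $\Rset(B_f\circ g)$'' in different words.

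One technical point deserves care (and the paper glosses over it too). The inference ``$\D^a f\in\Rset(B_f)$ implies $\D^a f\circ g\in\Rset(B_f\circ g)$'' is not literally valid: writing $\D^a f=P(B_f)/Q(B_f)$, the denominator $Q(B_f)\circ g$ may vanish identically on $J$ when the image of $g$ sits inside the zero set of $Q(B_f)$. For instance, with $f(y_1,y_2)=y_1y_2$, $B_f=\{f,\partial_1 f\}=\{y_1y_2,y_2\}$, and $g(x)=(x,0)$, one has $\partial_2 f=y_1\in\Rset(B_f)$ but $\partial_2 f\circ g=x\notin\Rset(B_f\circ g)=\Rset$. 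What \emph{does} transfer under precomposition with $g$ is algebraic dependence: any $\lvert B_f\rvert+1$ elements of $\Derivatives f$ satisfy a nonzero $\Rset$-polynomial relation, and applying the ring homomorphism $h\mapsto h\circ g$ preserves that relation, so $\{\D^a f\circ g\}$ still has transcendence degree at most $\lvert B_f\rvert$ over $\Rset$. Since your conclusion only needs finite transcendence degree before you extract a basis from $\Derivatives(f\circ g)$ and invoke clause~(\ref{enumi: 0702121122}), the argument goes through once this step is phrased in terms of transcendence degree rather than membership in $\Rset(B_f\circ g)$.
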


\begin{proof}
  This is trivial for $\JX$. 
  For $\CM$, 
  it suffices to show that if $
f \pcolon \Rset ^m \pto \Rset 
  $ and $g _0$, \ldots, $
g _{m - 1} \pcolon \Rset ^l \pto \Rset 
  $ are differentially algebraic,  
  so is $f \circ g$, where $g = \JX {(g _0, \dots, g _{m - 1})}$. 
  We may assume 
  that $\dom g$ and $J = \dom {(f \circ g)}$ are connected. 
  We use the characterization~(\ref{enumi: 0702121120}) of
  Lemma~\ref{lemma: 0702130051}. 
  Calculate each element of $\Derivatives {(f \circ g)}$ 
  by the chain rule to 
  see that it belongs to 
\begin{equation}
 \label{eq: 0702130056}
  \Rset 
   \bigl(
     \{\, d \circ g \mid d \in \Derivatives f \,\}
    \cup 
     \bigcup _{i = 0} ^{m - 1} 
     \{\, q \mathord\upharpoonright _J \mid q \in \Derivatives g _i \,\}
   \bigr), 
\end{equation}
  where $q \mathord\upharpoonright _J$ means the 
  restriction of $q$ to $J$. 
  By the assumption, 
  there are finite subsets $
A \subseteq \Derivatives f
  $ and $
B _i \subseteq \Derivatives g _i
  $ with $
 \Derivatives f \subseteq \Rset (A)
  $ and $
 \Derivatives g _i \subseteq \Rset (B _i) 
  $ for each $i = 0, \dots, m - 1$. 
  This implies that \eqref{eq: 0702130056} 
  stays unchanged by replacing $\Derivatives f$ by $A$ and 
  $\Derivatives g _i$ by $B _i$. 
\end{proof}

\subsection{The differential recursion operator}
 \label{subsection: differential recursion}

\begin{figure}
\begin{center}
 \includegraphics{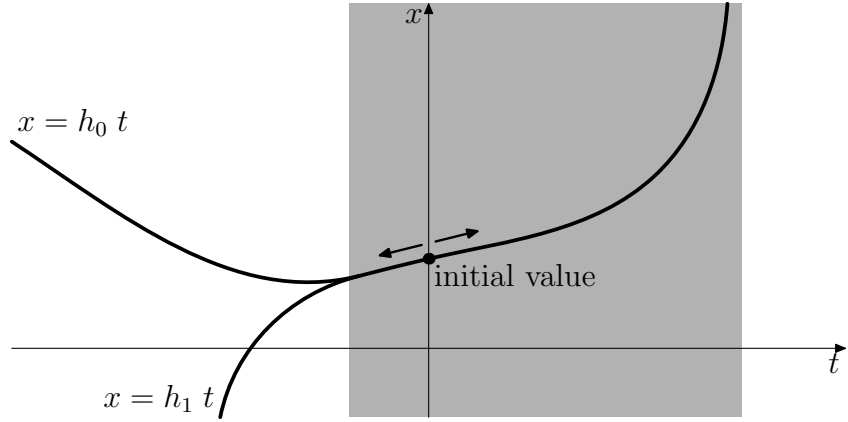}
\end{center}
 \caption{%
  When the equation is satisfied by both $h _0$ and $h _1$ 
  (as well as their restriction to each interval containing the origin), 
  how do we say that 
  the shaded interval is where there is a ``unique solution''?}
 \label{figure: unique}
\end{figure}

  To formulate the third operator, 
  we need a notion of \emph{unique solution} of an integral equation
  of the form \eqref{eq: integral equation} below, 
  where $h$ is the unknown. 
  For example, it sounds natural to say that 
  the tangent function restricted to 
  $(-\pi / 2, \pi / 2)$ uniquely solves $
h \ap t = \int _0 ^t \bigl( 1 + (h \ap \tau) ^2 \bigr) \, \d \tau 
  $.
  But as we are talking about partial functions, 
  the word ``unique'' should be used carefully, 
  because the restriction 
  to any subinterval $J \subseteq (-\pi / 2, \pi / 2)$ containing $0$ also 
  satisfies the equation on $J$. 
  Thus, out of the set~$H$ of all solutions, 
  we need to pick one function that
  deserves to be called the unique solution 
  defined on the largest possible interval 
  (Figure~\ref{figure: unique}). 
  Though Moore did not discuss this, it is not hard 
  to formulate this intuition: 
  for a set $H$ of functions of a type, 
  we say that a function $h \in H$ is 
\emph{unique} 
  in $H$ if 
  the restriction of any function in $H$ to $\dom h$ 
  is a restriction of $h$. 

\begin{definition}
 \label{definition: conservative differential recursion}
  Let $f \pcolon \Rset ^m \pto \Rset ^n$ and 
  $g \pcolon \Rset ^{m + 1 + n} \pto \Rset ^n$. 
  For each $\vec v \in \Rset ^m$, 
  let $H _{\vec v}$ be the set of all functions 
  $h \pcolon \Rset \pto \Rset ^n$ such that
\begin{enumerate}
 \item \label{enumi: first clause}
  $\dom h$ is either the empty set or a 
  possibly unbounded interval containing $0$, 
 \item 
  $\vec v \in \dom f$ if $\dom h$ is nonempty, 
 \item \label{enumi: integrand defined}
  $(\vec v, \tau, h \ap \tau) \in \dom g$
  for each $\tau \in \dom h$, and
 \item
  every $t \in \dom h$ satisfies
\begin{equation}
 \label{eq: integral equation}
 h \ap t =
  f \ap \vec v + 
   \int _0 ^t
    g \ap (\vec v, \tau, h \ap \tau) \, \d \tau. 
\end{equation}
\end{enumerate}
  Let $K _{\vec v}$ be the set of functions unique in $H _{\vec v}$. 
  By Lemma~\ref{lemma: 0604272255} in the appendix, 
  $K _{\vec v}$ has an element~$h _{\vec v}$ of which 
  all functions in $K _{\vec v}$ is a restriction. 
  Define $
\PR {(f, g)} \pcolon \Rset ^{m + 1} \pto \Rset ^n
  $ by $
 \dom {\bigl( \PR {(f, g)} \bigr)}
=
 \{\, (v, t) \in \Rset ^{m + 1} \mid t \in \dom h _{\vec v} \,\}
  $ and $\PR {(f, g)} \ap (\vec v, t) = h _{\vec v} \ap t$. 
\end{definition}

\begin{definition}
 \label{definition: primitive recursive}
  The class of 
\emph{real primitive recursive} 
  functions is the smallest class 
  containing the nullary functions
  $0 ^{0 \to 1}$, $1 ^{0 \to 1}$, $-1 ^{0 \to 1}$ and 
  closed under $\mathord{\JX}$, $\mathord{\CM}$ and $\mathord{\PR}$. 
\end{definition}

\begin{lemma}
 \label{lemma: primitive recursive functions}
  The following functions are real primitive recursive: 
  for each $n \in \Nset$, 
  the $n$-ary 
  constants $0 ^{n \to 1}$, $1 ^{n \to 1}$, $-1 ^{n \to 1}$; 
  for $n \in \Nset$ and $i = 0, \dots, n - 1$, 
  the $n$-ary projection $\Proj i n$ to the $i$\textsuperscript{th} component; 
  binary $\mathrm{add}$ and $\mathrm{mul}$;
  the functions $\mathrm{inv} _+$ (mapping $x > 0$ to $1 / x$), 
  $\mathrm{sqrt _+}$ (mapping $x > 0$ to $\sqrt x$)
  and $\mathrm{ln}$ (natural logarithm)
  defined on $(0, \infty)$;
  the total functions 
  $\mathrm{sin}$, $\mathrm{cos}$ and
  $\mathrm{exp}$; 
  the circle ratio $\pi$ as a nullary function. 
\end{lemma}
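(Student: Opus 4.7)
The plan is to walk down the list, realizing each function by a judicious combination of $\PR$, $\CM$ and $\JX$, starting with an arity-lifting bootstrap of the nullary constants and then constructing the elementary functions via single $\PR$'s, possibly after shifting the independent variable.

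For the bootstrap, applying $\CM$ to each nullary $c^{0\to 1}$ and the empty juxtaposition $\JX ()$ of arity $m\to 0$ (the unique total function $\Rset ^m \pto \Rset ^0$) yields the $m$-ary constants $c^{m\to 1}$ for $c \in \{0, 1, -1\}$ and every $m \geq 0$. From these the projections follow by $\PR$: take $\Proj{0}{1} = \PR (0^{0\to 1}, 1^{2\to 1})$, which solves $h' = 1$, $h(0) = 0$ to give $h \ap t = t$; and for each $n \geq 1$ set $\Proj{n}{n+1} = \PR (0^{n\to 1}, 1^{n+2\to 1})$ and $\Proj{i}{n+1} = \PR (\Proj{i}{n}, 0^{n+2\to 1})$ for $i < n$, solving $h' = 1$, $h(0) = 0$ and $h' = 0$, $h(0) = v _i$ respectively.

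With constants and projections in hand, arithmetic and the total transcendentals are one-line $\PR$'s: $\mathrm{add} = \PR (\Proj{0}{1}, 1^{3\to 1})$ produces $h _v \ap t = v + t$; $\mathrm{mul} = \PR (0^{1\to 1}, \Proj{0}{3})$ produces $h _v \ap t = v \cdot t$; and $\mathrm{exp} = \PR (1^{0\to 1}, \Proj{1}{2})$ solves $h' = h$, $h(0) = 1$, giving $\exp$ total on $\Rset$. For $\mathrm{sin}$ and $\mathrm{cos}$, apply $\PR$ with $m = 0$, $n = 2$, initial value $\JX (0^{0\to 1}, 1^{0\to 1})$, and integrand $\JX (\Proj{2}{3}, \CM (\mathrm{mul}, \JX (-1^{3\to 1}, \Proj{1}{3})))$ coding $(t, u, v) \mapsto (v, -u)$; the result is the pair $(\mathrm{sin}, \mathrm{cos})$, from which each is extracted by $\CM$ with $\Proj{0}{2}$ or $\Proj{1}{2}$.

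For $\mathrm{inv} _+$, $\mathrm{sqrt} _+$ and $\mathrm{ln}$, whose natural initial conditions sit at $x = 1$ rather than $0$, we apply $\PR$ to an ODE shifted so that $t = 0$ corresponds to $x = 1$ and then postcompose with $x \mapsto x - 1$. Let $\tilde h = \PR (1^{0\to 1}, g)$ with $g \ap (t, y) = -y ^2$, built as $\CM (\mathrm{mul}, \JX (-1^{2\to 1}, \CM (\mathrm{mul}, \JX (\Proj{1}{2}, \Proj{1}{2}))))$; then $\tilde h ' = -\tilde h ^2$, $\tilde h (0) = 1$ has solution $\tilde h \ap t = 1 / (1 + t)$, so $\mathrm{inv} _+ \ap x = \tilde h \ap (x - 1)$. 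Analogously, $\mathrm{sqrt} _+$ comes from $\tilde l ' = 1 / (2 \tilde l)$, $\tilde l (0) = 1$ (integrand $\CM (\mathrm{inv} _+, \CM (\mathrm{mul}, \JX (2^{2\to 1}, \Proj{1}{2})))$ with $2^{2\to 1} = \CM (\mathrm{add}, \JX (1^{2\to 1}, 1^{2\to 1}))$), and $\mathrm{ln}$ from $\tilde k ' = 1 / (1 + t)$, $\tilde k (0) = 0$ (integrand $\CM (\mathrm{inv} _+, \CM (\mathrm{add}, \JX (1^{2\to 1}, \Proj{0}{2})))$). Finally, $\mathrm{arctan} = \PR (0^{0\to 1}, g _a)$ with $g _a \ap (t, y) = 1 / (1 + t ^2)$ (built analogously) is total, and $\pi = \CM (\mathrm{mul}, \JX (4^{0\to 1}, \CM (\mathrm{arctan}, 1^{0\to 1})))$ with $4^{0\to 1}$ obtained by iterating $\mathrm{add}$ on $1^{0\to 1}$.

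The main obstacle is to verify that $\PR$ produces precisely the advertised domain in each of the shifted constructions, since Definition~\ref{definition: conservative differential recursion} selects the unique maximal interval containing $0$ on which the integral equation is solvable and the integrand stays defined along the solution. For $\tilde h$, the Picard--Lindel\"of theorem applied to the locally Lipschitz integrand $-y ^2$ secures local uniqueness, and the solution blows up at $t = -1$, so $\dom \tilde h = (-1, \infty)$; for $\tilde l$, extension past $t = -1$ fails because $\tilde l$ reaches $0$, at which $1 / (2y)$ becomes undefined, yielding $\dom \tilde l = (-1, \infty)$; for $\tilde k$ and $\mathrm{arctan}$, the integrands depend only on $t$ and are defined precisely on $(-1, \infty)$ and all of $\Rset$, respectively. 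The postcomposition with $x \mapsto x - 1$ then produces the stated domains $(0, \infty)$ for $\mathrm{inv} _+$, $\mathrm{sqrt} _+$ and $\mathrm{ln}$.
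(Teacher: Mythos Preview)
Your proof is correct and follows essentially the same route as the paper: the same bootstrap of $n$-ary constants via the empty juxtaposition, the same inductive $\PR$-construction of projections, the same one-line $\PR$'s for $\mathrm{add}$, $\mathrm{mul}$, $\mathrm{exp}$ and the $(\sin,\cos)$ pair, and the same ``solve from initial value $1$, then shift by $x\mapsto x-1$'' trick for $\mathrm{inv}_+$, $\mathrm{sqrt}_+$ and $\mathrm{ln}$, with $\pi$ obtained as $4\cdot\mathrm{Arctan}\,1$. Your final paragraph verifying that the maximal-interval clause of Definition~\ref{definition: conservative differential recursion} yields exactly the domain $(-1,\infty)$ before the shift is a useful addition that the paper leaves implicit.
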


\begin{proof}
  The constant $0 ^{n \to 1}$ is built by $
 0 ^{n \to 1} 
= 
 0 ^{0 \to 1} \circ \JX {(\,)}
  $; similarly for $1 ^{n \to 1}$ and $-1 ^{n \to 1}$. 
  Then inductively define $
 \Proj{i}{i + 1} 
= 
 \PR {(0 ^{i \to 1}, 1 ^{i + 2 \to 1})}
  $ and $
 \Proj i {n + 1} 
=
 \PR {(\Proj i n, 0 ^{n + 2 \to 1})}
  $.  Using these, let $
 \mathrm{add}
=
 \PR {(\Proj 0 1, 1 ^{3 \to 1})}
  $ and $
 \mathrm{mul}
=
 \PR {(0 ^{1 \to 1}, \Proj 0 3)}
  $. 
  For $\mathrm{inv} _+$, define
\begin{align}
   f 
 &
  =
   \PR% ^{1 \to 1} 
   {\bigl(
    1 ^{0 \to 1}, 
      \mathrm{mul}
     \circ
      \JX{\bigl( 
       -1 ^{1 \to 1}, \mathrm{mul} \circ \JX {(\Proj 0 1, \Proj 0 1)}
      \bigr)}
     \circ
      \Proj 1 2
   \bigr)},
 \\
   \mathrm{inv} _+ 
 &
  =
   f \circ
   \bigl(
    \mathrm{add} \circ
    \JX {(\Proj 0 1, -1 ^{1 \to 1})}
   \bigr), 
\end{align}
  or, more colloquially, 
\begin{align}
   f \ap t
 &
  =
   1 -
   \int _0 ^t 
    (f \ap \tau) ^2 \, \d \tau,
 &
   \mathrm{inv} _+ \ap t
 &
  =
   f \ap (t - 1).
\end{align}
  Square root is defined analogously by
\begin{align}
 \label{eq: square root}
   f \ap t
 &
  =
    1
   +
    \int _0 ^t 
     \mathrm{inv} _+ \ap (2 \cdot f \ap \tau) \, \d \tau, 
 &
   \mathrm{sqrt} _+ \ap t
 &
  =
   f \ap (t - 1). 
\end{align}
  Logarithm and exponentiation are analogous, 
  using suitable integral equations. 
  For the trigonometric functions, 
  let $
   \mathrm{sin}
  =
   \Proj 0 2 \circ \mathrm{trig}
  $ and $
   \mathrm{cos}
  =
    \Proj 1 2 \circ \mathrm{trig}
  $, where
\begin{equation}
  \mathrm{trig} %^{1 \to 2} 
  = 
   \PR %^{1 \to 2} 
   {\bigl(
    \JX {(0 ^{0 \to 1}, 1 ^{0 \to 1})}, 
    \JX{\bigl(
      \Proj 2 3, 
      \bigl(
       \mathrm{mul} \circ \JX {(-1 ^{3 \to 1}, \Proj 1 3)}
      \bigr)
    \bigr)}
   \bigr)}, 
\end{equation}
  which is to say, 
\begin{equation}
   \begin{pmatrix}
    \mathrm{sin} \ap t \\ 
    \mathrm{cos} \ap t
   \end{pmatrix}
 =
   \begin{pmatrix}
    0 \\
    1
   \end{pmatrix}
  + 
  \int _0 ^t
   \begin{pmatrix}
    \mathrm{cos} \ap \tau \\
    - \mathrm{sin} \ap \tau
   \end{pmatrix}
  \, \d \tau.
\end{equation}
  The circle ratio is $
 \pi \ap (\,) = 4 \cdot \mathrm{Arctan} \ap 1
  $, with $\mathrm{Arctan}$ defined by a suitable integral equation. 
\end{proof}

  Some authors say
  ``the function $1 / x$ is real primitive recursive'' to mean 
  that $\mathrm{inv} _+$ is. 
  It is not clear how such assertions without specification of domain
  can be justified. 

  The reader may have felt uncomfortable with the 
  unwieldy process of
  Definition~\ref{definition: conservative differential recursion}
  in picking the right solution $h _{\vec v}$ out of $H _{\vec v}$. 
  This can be simplified 
  if we discuss only real primitive recursive functions, 
  because of the following facts that result from the 
\emph{Uniqueness Theorem}
  for initial value problems~%
\cite{lang93:_real_funct_analy}
  and the 
\emph{Cauchy--Kowalewsky Theorem}~%
\cite[Section~2.4]{krantz02:_primer_of_real_analy_funct}. 

\begin{theorem}
 \label{theorem: preservation under pr}
  Let $f$, $g$, $v$, $H _{\vec v}$ and $h _{\vec v}$ be as in 
  Definition~\ref{definition: conservative differential recursion}. 
\begin{enumerate}
 \item \label{enumi: 0702201533}
  If $g$ is an analytic%
\footnote{%
This fact is often stated with a weaker assumption that $g$ be
\emph{Lipschitz continuous}.} 
  function with open domain,
  $H _{\vec v}$ is
  the set of all restrictions of $h _v$. 
 \item \label{enumi: 0702201534}
  If $
f
  $ and $
g
  $ are analytic functions with open domain, 
  so is $\PR {(f, g)}$. 
\end{enumerate}
\end{theorem}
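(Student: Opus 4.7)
The plan is to treat the two parts separately, invoking the Uniqueness Theorem for part~(\ref{enumi: 0702201533}) and the Cauchy--Kowalewsky Theorem for part~(\ref{enumi: 0702201534}).

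For part~(\ref{enumi: 0702201533}), I would first differentiate \eqref{eq: integral equation} to see that every $h \in H_{\vec v}$ is a $C^1$ solution of the IVP $h'(\tau) = g(\vec v, \tau, h \ap \tau)$, $h \ap 0 = f \ap \vec v$, with analytic (hence locally Lipschitz) right-hand side. The Uniqueness Theorem then forces any two members $h_0, h_1$ of $H_{\vec v}$ to coincide on the connected interval $\dom h_0 \cap \dom h_1$, which contains $0$ by clause~(\ref{enumi: first clause}). Consequently every $h \in H_{\vec v}$ is automatically unique in $H_{\vec v}$, i.e.\ $K_{\vec v} = H_{\vec v}$. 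The maximal element $h_{\vec v}$ supplied by Lemma~\ref{lemma: 0604272255} therefore has every member of $H_{\vec v}$ as a restriction; conversely, the restriction of $h_{\vec v}$ to any subinterval of $\dom h_{\vec v}$ containing $0$ (or to the empty set) trivially satisfies clauses~(\ref{enumi: first clause})--(\ref{enumi: integrand defined}) and \eqref{eq: integral equation}, and so lies in $H_{\vec v}$.

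For part~(\ref{enumi: 0702201534}), fix $(\vec v_0, t_0) \in \dom{\bigl(\PR{(f,g)}\bigr)}$; without loss of generality $t_0 \geq 0$. The goal is to exhibit an open neighborhood of $(\vec v_0, t_0)$ on which $\PR{(f,g)}$ is defined and analytic, which I do by continuing a parametrized analytic solution along the compact segment $[0, t_0]$. At each $t_1 \in [0, t_0]$, Cauchy--Kowalewsky applied to $\partial_t h = g(\vec v, t, h)$ with $\vec v$ regarded as an analytic parameter and initial condition $h \ap t_1 = h_{\vec v_0} \ap t_1$ produces an open product neighborhood $U_{t_1} \times I_{t_1}$ of $(\vec v_0, t_1)$ on which the IVP is solved by a jointly analytic function $\tilde h_{t_1}$. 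Since $\{(\vec v_0, t, h_{\vec v_0} \ap t) : 0 \leq t \leq t_0\}$ is a compact subset of the open set $\dom g$, these neighborhoods have uniformly bounded-below radii, so finitely many of the $I_{t_j}$ cover $[0, t_0]$. Setting $U := \bigcap_j U_{t_j}$, the local analytic branches $\tilde h_{t_j}$ agree on overlaps by part~(\ref{enumi: 0702201533}) applied pointwise in $\vec v \in U$, and so glue to a jointly analytic function on an open neighborhood of $(\vec v_0, t_0)$ that coincides with $\PR{(f,g)}$ there.

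The main technical obstacle is the continuation step in part~(\ref{enumi: 0702201534}): Cauchy--Kowalewsky gives only local analyticity, and I must ensure that the $\vec v$-parameter neighborhood does not shrink to a point as $t$ traverses the (possibly long) segment $[0, t_0]$. Compactness of the segment, together with the finite intersection defining $U$, is what carries the argument through. The remaining verifications---that the finitely many analytic branches glue consistently via the uniqueness from part~(\ref{enumi: 0702201533}), and that the integral equation passes to the union---are routine.
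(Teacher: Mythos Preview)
Your approach---Uniqueness Theorem for part~(\ref{enumi: 0702201533}), Cauchy--Kowalewsky for part~(\ref{enumi: 0702201534})---is exactly the one the paper indicates (the paper does not spell out a proof beyond citing those two theorems). Your treatment of part~(\ref{enumi: 0702201533}) is fine.

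Part~(\ref{enumi: 0702201534}), however, has a genuine gap in the gluing step. You prescribe the initial condition at each $t_1$ to be the \emph{constant} value $h_{\vec v_0}\ap t_1$, independent of $\vec v$. For $\vec v \neq \vec v_0$ the resulting branches $\tilde h_{t_j}(\vec v,\,\cdot\,)$ solve the same differential equation but with \emph{unrelated} initial data at the different base points $t_j$; part~(\ref{enumi: 0702201533}) (or the Uniqueness Theorem behind it) only identifies two solutions that share a common value at some point, which yours do not. So the branches do not agree on overlaps, and even the first branch at $t_1 = 0$ has $\tilde h_0(\vec v, 0) = f\ap\vec v_0$ rather than $f\ap\vec v$, so whatever you glued would not be $\PR{(f,g)}$ anyway. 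Note also that your argument, as written, never uses the analyticity of $f$.

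The fix is to run the continuation inductively rather than in parallel: start at $t=0$ with the \emph{analytic} initial data $\vec v \mapsto f\ap\vec v$, apply Cauchy--Kowalewsky (with $\vec v$ as parameter) to get a jointly analytic solution on some $U_0 \times I_0$, then at a point $t_1 \in I_0$ take the just-constructed $\vec v \mapsto \tilde h_0(\vec v, t_1)$ as the new analytic initial data and apply Cauchy--Kowalewsky again, and so on. Now consecutive branches do agree on overlaps (they share the same value at $t_j$ for every $\vec v$), and compactness of $[0,t_0]$ together with a uniform lower bound on the $t$-radius coming from the compact graph $\{(\vec v_0, t, h_{\vec v_0}\ap t) : 0 \le t \le t_0\} \subseteq \dom g$ ensures that finitely many steps reach~$t_0$. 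The $\vec v$-neighbourhood may shrink at each step, but a finite intersection of open neighbourhoods of $\vec v_0$ is still open.
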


  The fact~(\ref{enumi: 0702201533}) says that 
  a solution of \eqref{eq: integral equation}
  may diverge to infinity at some point but can never ``branch''
  as in Figure~\ref{figure: unique}, 
  provided $g$ is smooth enough. 
  We therefore could have dispensed with 
  Lemma~\ref{lemma: 0604272255} 
  and simply let $h _{\vec v}$ be 
  the (graph) union of $H _v$,
  so far as real primitive recursive functions are concerned, 
  because they are analytic by (\ref{enumi: 0702201534}). 

\subsection{Campagnolo's differential recursion}
 \label{subsection: campagnolo}

  The clauses~(\ref{enumi: first clause})--%
  (\ref{enumi: integrand defined}) of Definition~%
\ref{definition: conservative differential recursion}
  guarantee that
  the integral equation \eqref{eq: integral equation} makes sense
  for all $t \in \dom h$. 
  The clause (\ref{enumi: integrand defined}), however,
  could be slightly relaxed, 
  since a small set of singularities in the integrand does
  not affect the integration. 
  Define $\CampaPR$ by 
  replacing (\ref{enumi: integrand defined}) with
\begin{enumerate}
 \item[(\ref{enumi: integrand defined}$'$)]
  $(\vec v, \tau, h \ap \tau) \in \dom g$ for
  any $\tau \in \dom h \setminus S$, 
  where $S$ is a countable set of isolated points.
\end{enumerate}
  This is due to Campagnolo~%
\cite[Definition~2.4.2]{campagnolo01:_comput_compl_real_valued_recur}, 
  though he does not present 
  a precise specification of the ``unique'' solution
  as we noted in the Section~\ref{subsection: differential recursion}. 
  The choice between (\ref{enumi: integrand defined}) and 
  (\ref{enumi: integrand defined}$'$) is somewhat similar to 
  the discussion regarding 
  Principle~\ref{principle: call-by-value}. 
  The issue is whether 
  $g \ap (\vec v, \tau, h \ap \tau)$, where $\tau \in [0, t]$, 
  is a ``subexpression'' of 
  the right-hand side of the equation~\eqref{eq: integral equation}. 
  Without going into the philosophical discussion to 
  ask which is ``natural,''
  we point out some differences this choice incurs. 

\begin{figure}
\begin{center}
 \includegraphics{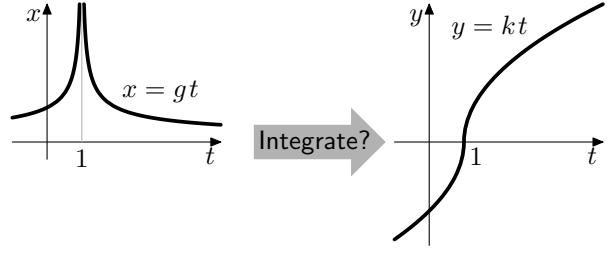}
 \caption{Integrand with a singularity.}
  \label{figure: integrand with a singularity}
\end{center}
\end{figure}

  Theorem~\ref{theorem: preservation under pr}~(\ref{enumi: 0702201534}) fails
  if we replace $\PR$ by $\CampaPR$, 
  as the following example shows 
  (Figure~\ref{figure: integrand with a singularity}). 
  The function $g \pcolon \Rset \pto \Rset$ defined by $
\dom g = \Rset \setminus \{1\}
  $ and $
 g \ap t
=
 \mathrm{inv} _+ \ap \bigl( 
  \mathrm{sqrt} _+ \ap \bigl( \mathrm{sqrt} _+ \ap (t - 1) ^2 \bigr) 
 \bigr)
=
 1 /\! \sqrt{\lvert t - 1 \rvert} 
  $ is real primitive recursive 
  by Lemma~\ref{lemma: primitive recursive functions}. 
  But $
k = \CampaPR {(-2 ^{0 \to 1}, g \circ \Proj 0 2)} 
 \pcolon \Rset \pto \Rset
  $, 
  where $-2 ^{0 \to 1}$ is the constant function with value~$-2$,
  is the total function given by
\begin{equation}
 \label{eq: 06071790810}
 k \ap t = 
  \begin{cases} 
    + 2 \cdot \sqrt{ t - 1} & \text{if $t \geqslant 1$}, \\
    - 2 \cdot \sqrt{-t + 1} & \text{if $t < 1$}, 
  \end{cases}
\end{equation}
  which is not differentiable at $1$. 
  Note that $\PR {(-2 ^{0 \to 1}, g \circ \Proj 0 2)}$ is 
  its restriction to $(-\infty, 1)$
  and thus analytic. 
  For a subtler example, recall the equation
  \eqref{eq: square root} for $\mathrm{sqrt} _+$; 
  with $\CampaPR$, 
  the same equation produces the square root function
  defined on $[0, \infty)$, rather than on $(0, \infty)$. 

  This breaks the assumption of 
  Theorem~\ref{theorem: preservation under pr}~(\ref{enumi: 0702201533})
  and thus gives rise to incomparable functions in $H _{\vec v}$ 
  when, say, $f = 1 ^{0 \to 1}$ and $g = k \circ \Proj 1 2$, 
  with $k$ from \eqref{eq: 06071790810}; 
  that is, the equation
\begin{equation}
 h \ap t = 1 + \int _0 ^t k \ap (h \ap \tau) \, \d \tau
\end{equation}
  has two solutions 
  that take different values at a point. 

  Keeping the class 
  analytic also conforms to
  Moore's intention%
~\cite[Definition~9]{moore96:_recur_theor_reals_contin_comput} 
  to make the equation \eqref{eq: integral equation} 
  equivalent to 
\begin{equation}
 h \ap 0 = f \ap v, \qquad
 \D ^1 h \ap t = g \ap (v, t, h \ap t), 
\end{equation}
  which would not make sense for non-differentiable $h$. 

\subsection{A primitive recursive but not differentially algebraic function}
 \label{subsection: counterexample}

  Claim~\ref{claim: primitive recursive} would not make sense if 
  we adopted $\CampaPR$ in defining real primitive recursive functions,
  because there would then arise non-analytic functions, 
  as we noted above. 
  We now show that, 
  even under our restrictive definition with the 
  analyticity-preserving $\PR$, 
  the claim fails. 

  Define 
  $\Check \varGamma \pcolon \Rset ^2 \pto \Rset$ by 
  $\dom \varGamma = (0, \infty) ^2$ and
\begin{equation}
 \label{eq: 0604151409}
  \Check \varGamma \ap (R, x)
 = 
  \int _{1 / R} ^R
   \exp {\bigl( (x - 1) \cdot \ln t - t \bigr)}
  \, \d t. 
\end{equation}
  Define 
  Euler's \emph{gamma function} $
\varGamma \pcolon \Rset \pto \Rset$ by $\dom \varGamma = (0, \infty)
  $ and
\begin{equation}
 \label{eq: 0604121721}
  \varGamma \ap x = \lim _{R \to \infty} \Check \varGamma \ap (R, x).
\end{equation}
  It can be verified % Sugiura, Kaiseki Nyumon, Theorem 15.1
  that this value converges and satisfies
\begin{equation}
 \label{eq: 0604151413}
  \D ^n \varGamma \ap x
 =
  \lim _{R \to \infty} \D ^{(0, n)} \Check \varGamma \ap (R, x)
\end{equation}
  for each $n \in \Nset$ and $x \in (0, \infty)$. 
  H\"older showed that $\varGamma$ is 
  not differentially algebraic~%
\cite{hoelder86:_ueber_eigen_gammaf_differ}. 

  We do not know if $\varGamma$ is real primitive recursive, 
  but $\Check \varGamma$ is easily shown real primitive recursive, 
  using Lemma~\ref{lemma: primitive recursive functions}. 
  However, 
  contrary to Claim~\ref{claim: primitive recursive}, 
  it is not differentially algebraic. 
  For assume that it were. 
  We would then have 
  a nonzero polynomial $P$ such that
\begin{equation}
 \label{eq: 0604131901}
  P \ap \bigl( 
   \Check \varGamma \ap (R, x), 
   \D ^{(0, 1)} \Check \varGamma \ap (R, x), 
   \dots, 
   \D ^{(0, \arity P - 1)} \Check \varGamma \ap (R, x) 
  \bigr)
 =
  0
\end{equation}
  for each $(R, x) \in (0, \infty) ^2$. 
  Note that we used the characterization~(\ref{enumi: Z global})
  of Theorem~\ref{theorem: 0607111001}
  in order to take $P$ independent of $R$. 
  We take the limit of \eqref{eq: 0604131901} as 
  $R \to \infty$, which by \eqref{eq: 0604151413} yields
\begin{equation}
  P \ap ( 
   \varGamma \ap x, 
   \D ^1 \varGamma \ap x, 
   \dots, 
   \D ^{\arity P - 1} \varGamma \ap x 
  )
 =
  0, 
\end{equation}
  contradicting H\"older. %Theorem~\ref{theorem: hoelder}. 

%   We do not know whether there is a unary function which is 
%   primitive recursive but not differentially algebraic. 

\section{Other classes and related works}
 \label{section: recursive functions}

  This section discusses 
  some other operators 
  introduced by Moore and other authors. 

\subsection{Minimization and Moore's real recursive functions}

  For a function $f \pcolon \Rset ^{m + 1} \pto \Rset$, 
  Moore defines $\MN f \pcolon \Rset ^m \pto \Rset$ by
\begin{equation}
 \label{eq: definition of minimization}
 \MN f \ap \vec v = 
\begin{cases}
  t ^+ = \inf {\{\, t \geq 0 \mid f \ap (\vec v, t) = 0 \,\}} 
 &
  \text{if} \ t ^+ < -t ^-, 
 \\
  t ^-  = \sup {\{\, t \leq 0 \mid f \ap (\vec v, t) = 0 \,\}}
 &
  \text{otherwise}. 
\end{cases}
\end{equation}
  The class of
\emph{real recursive} 
  functions%
\footnote{%
  This ``recursiveness'' of Moore's 
  should not be confused with 
  the same word also used in the context of Computable Analysis. 
  As we see in 
  Appendix~\ref{section: appendix: iteration}, 
  Moore's real recursive functions can even be discontinuous. 
} 
  is the smallest class 
  containing all real primitive recursive functions and 
  closed under 
  $\mathord{\JX}$, $\mathord{\CM}$, $\mathord{\PR}$ and $\mathord{\MN}$. 

  Moore states the definition of $\MN$ 
  in a way that leaves ambiguous 
  whether \eqref{eq: definition of minimization} 
  has a value when, say, $
\dom f = \Rset ^m \times [1, \infty)
  $ and $f \ap (v, t) = 2 - t$ for all $t \geq 1$. 
  Should it have the value~$2$, 
  or be left undefined because 
  ``the zero-searching program gets stuck''? 

  It turns out that, 
  whichever definition we choose, 
  Moore's claim about iteration~%
\cite[Proposition~11]{moore96:_recur_theor_reals_contin_comput} 
  remains true, 
  in the following modified form. 
  Since the original proof again forgets partial functions, 
  we present a new proof in Appendix~\ref{section: appendix: iteration}. 

\begin{lemma}
 \label{lemma: iteration preserves recursiveness}
  If $f \pcolon \Rset ^m \pto \Rset ^m$ is real recursive, 
  there is a real recursive function 
  $g \pcolon \Rset ^{m + 1} \pto \Rset ^m$ 
  that extends the function $g'$ defined by 
  $
 \dom g' 
=
 \bigl\{\, (v, k) \in \Rset \times (\Nset \setminus \{0\}) \bigm| v \in \dom f ^k \,\bigr\}
  $ and
  $g' \ap (\vec v, k) = f ^k \ap \vec v$
  for all $(v, k) \in \dom g'$, 
  where $f ^k = \underbrace{f \circ \dots \circ f} _k$. 
\end{lemma}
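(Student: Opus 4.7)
The plan is to encode the discrete iteration of $f$ as a continuous-time dynamical system, building $g$ using $\PR$ in combination with $\MN$. I proceed in three steps.

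First, I establish auxiliary real recursive ``digital'' helpers. Using $\MN$, I show that the floor function $\lfloor \cdot \rfloor \pcolon \Rset \pto \Rset$ is real recursive, obtained by applying $\MN$ to a test function that vanishes at each integer. From this and the real primitive recursive operations of Lemma~\ref{lemma: primitive recursive functions}, I build a smooth non-negative periodic ``gate'' function $\eta \pcolon \Rset \to \Rset$ with $\eta \equiv 0$ in a small neighborhood of each integer and $\int _k ^{k+1} \eta(\tau) \, \d\tau = 1$ for every $k \in \Zset$, as well as a companion (singular) weight $\mu$ used for exact latching.

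Next, using $\PR$, I construct a joint state $(y, w) \pcolon \Rset ^m \times \Rset \pto \Rset ^m \times \Rset ^m$ defined by a system of integral equations of the form
\[
 y(v, t) = v + \int _0 ^t \eta(\tau) \cdot \bigl( f(w(v, \tau)) - w(v, \tau) \bigr) \, \d\tau,
 \quad
 w(v, t) = v + \int _0 ^t \mu(\tau) \cdot \bigl( y(v, \tau) - w(v, \tau) \bigr) \, \d\tau,
\]
where $\mu$ is arranged so that $w$ exactly latches onto the value $y(v, k)$ throughout the support of $\eta$ inside $[k, k+1]$. Granted this, the integrand of $y$ contributes exactly $f(y(v, k)) - y(v, k)$ over $[k, k+1]$, so $y(v, k+1) = f(y(v, k))$; induction on $k$ yields $y(v, k) = f ^k(v)$ whenever $v \in \dom f ^k$. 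Setting $g = y$ gives a real recursive function that extends $g'$, completing the proof.

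The main obstacle is the design of $\mu$ so that $w$ latches onto $y(v, k)$ \emph{exactly}, rather than merely approximately. Smooth autonomous relaxation dynamics converge only exponentially, so exact finite-time convergence requires a coefficient $\mu$ with singular (integrable but unbounded) behavior near the endpoint of the latching interval, for instance proportional to $1/(c - \tau)$. Such $\mu$ can be realized as a real recursive function using $\mathrm{inv} _+$ composed with $\MN$-derived scaffolding; one must then verify that the resulting integrand remains defined on the domain of the solution (Principle~\ref{principle: call-by-value}) and that the maximal-unique-solution clause in Definition~\ref{definition: conservative differential recursion} singles out the intended extension. Partial $f$ is handled automatically: if $v \notin \dom f ^k$, the integrand becomes undefined past some time $\tau < k$, and $\dom g$ shrinks accordingly, consistent with the requirement that $g$ only extend $g'$ on pairs $(v, k)$ where every intermediate iterate is defined.
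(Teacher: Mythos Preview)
Your two-register architecture---a ``compute'' state $y$ and a ``latch'' state $w$, alternating phases governed by periodic coefficients, with a $1/(c-\tau)$-type weight forcing exact finite-time latching---is precisely the paper's construction. The paper's pair $(g,h)$, the square wave $\mathrm{clk}$, and the singular coefficient $\overline{\mathrm{inv}}\circ\mathrm{zigzag}$ play the roles of your $(y,w)$, $\eta$, and $\mu$.

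There is, however, a real gap in your treatment of partial $f$. Your $y$-integrand contains the subexpression $f\bigl(w(v,\tau)\bigr)$ unguarded. During the latching phase $w$ travels continuously from its old value $f^{k-1}(v)$ to $y(v,k)=f^{k}(v)$; by Principle~\ref{principle: call-by-value} the whole integrand is undefined at any $\tau$ with $w(v,\tau)\notin\dom f$, \emph{even though} $\eta(\tau)=0$ there. Consequently the $\PR$ solution may terminate before time $k$ although $v\in\dom f^{k}$, and your $g$ then fails to extend $g'$. Your assertion that ``partial $f$ is handled automatically'' is therefore incorrect. The paper confronts exactly this point: it feeds $f$ the argument $h(v,\tau)-\mathrm{clk}(\tau)\cdot\bigl(h(v,\tau)-v\bigr)$, which collapses to $v$ during the latching phase, so $f$ is only ever evaluated at $v$ or at an already-computed iterate. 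The paper's closing remark states explicitly that this guard term cannot be dropped, citing Principle~\ref{principle: call-by-value}.

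A secondary issue: if you realize $\mu$ through $\mathrm{inv}_+$, the coefficient is literally undefined at the singular instant, violating clause~(\ref{enumi: integrand defined}) of Definition~\ref{definition: conservative differential recursion} and halting the solution there. The paper avoids this by building the \emph{total} function $\overline{\mathrm{inv}}$ directly from $\MN$ (so $\overline{\mathrm{inv}}(0)=0$), making the integrand everywhere defined though discontinuous.
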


  We have to note, however, that 
  the class of real recursive functions is probably not well-behaved, 
  since, 
  with $\MN$ producing non-smooth functions, 
  the class no longer enjoys 
  Theorem~\ref{theorem: preservation under pr}. 
  We therefore doubt the significance of Claim~\ref{claim: recursive}, 
  although it could be justified 
  by using Lemma~\ref{lemma: iteration preserves recursiveness} to
  simulate Turing machines as Moore did. 

\subsection{Linear differential recursion}

  We have seen that 
  many of the problems in Moore's original work
  were caused by
  failure to deal with partial functions properly. 
  Some authors avoid this trouble by
  studying only operators preserving totality, 
  so that partial functions never come into discussion. 
  Campagnolo\- and Moore~%
\cite{campagnolo01:_upper_and_lower_bound_contin_time_comput}
  take this path by considering
  \emph{linear differential recursion}
  in place of $\PR$. 
  For classes defined by this operator, 
  some relationships with digital computation
  are known~%
\cite{campagnolo01:_comput_compl_real_valued_recur, 
      bournez04:_analog_charac_of_elemen_comput}. 

\subsection{Open problems}

  Claim~\ref{claim: primitive recursive} 
  has been the 
  main rationale for calling variants of Moore's classes
  a model of analog computation. 
  Now that we have lost it, 
  an important challenge is the following. 

\begin{openproblem}
  Find a subclass of our real primitive recursive functions, 
  preferably with an equally simple definition, 
  that has a close relationship to the differentially algebraic functions. 
\end{openproblem}

  Another direction would be to reformulate further
  the rest of Moore's work, as well as other authors' works 
  that also suffer from the same kind of ambiguity. 
  For example, 
  it may be interesting to 
  work out Mycka and Costa's class
  arising from the operator of taking limits~%
\cite{mycka04:_real_recur_funct_and_their_hierar}. 

\section*{Acknowledgement}

  The author thanks 
  Ma\-ri\-ko~Ya\-sugi at Kyoto Sangyo University 
  for the discussion that led to this work. 
  Comments by J.\thinspace F.~Costa at Instituto Superior T\'ecnico
  helped improve the presentation of 
  Section~\ref{subsection: differential recursion}. 

\clearpage
\begin{small}
\bibliographystyle{jipsj}
%\bibliography{bib}

\end{small}

\appendix\small

\section{Old results}

  We list some known theorems that we used in 
  Section~\ref{section: differential algebraicity}. 

  The following
\emph{Baire Category Theorem}
  is used in the proof Theorem~\ref{theorem: 0607111001}. 

\begin{theorem}
 \label{theorem: baire}
  Let $J$ be a subset of\/ $\Rset ^m$. 
  The union of countably many closed subsets of $J$ with empty interior 
  has empty interior. 
\end{theorem}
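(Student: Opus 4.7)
The plan is to adapt the classical Baire nested ball construction to this setting, arguing by contradiction. Suppose $F _0, F _1, \ldots$ are closed subsets of $J \subseteq \Rset ^m$, each with empty interior, and suppose that the union $U = \bigcup _n F _n$ had nonempty interior. Then I would fix a closed ball $\overline{B _0}$ of some positive radius $r _0$ with $\overline{B _0} \subseteq U$ to serve as the starting point of a shrinking construction.

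The core of the argument is to build inductively a nested sequence of closed balls $\overline{B _{n+1}} \subseteq B _n$ of radii $r _{n+1} < r _n / 2$, chosen so that $\overline{B _{n+1}}$ is disjoint from $F _n$. Such a ball exists at each step because $B _n \setminus F _n$ is open (since $F _n$ is closed) and nonempty: if it were empty, then $B _n \subseteq F _n$ would contradict the assumption that $F _n$ has empty interior. Thus one can always pick a small enough closed ball inside this open set, with radius chosen also less than $r _n / 2$ so that the radii tend to zero.

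By completeness of $\Rset ^m$, the centers of the balls form a Cauchy sequence whose limit $x$ lies in $\bigcap _n \overline{B _n}$. Then $x \in \overline{B _0} \subseteq U$, so $x \in F _n$ for some $n$; but by construction $x \in \overline{B _{n+1}}$, which is disjoint from $F _n$, giving the desired contradiction. The only mildly delicate point is the simultaneous requirement that the radii be strictly positive and shrink to zero, but this is routine once one notes that any nonempty open subset of $\Rset ^m$ contains a closed ball of arbitrarily small radius. The argument is insensitive to whether ``closed'' and ``empty interior'' are read relative to $J$ or relative to $\Rset ^m$, since the balls in the construction can be taken to lie inside $\overline{B _0} \subseteq J$ throughout.
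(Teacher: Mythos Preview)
Your proposal is correct and follows essentially the same nested-ball construction as the paper's proof: the paper likewise picks a ball inside a given nonempty open set, then inductively shrinks to balls of radius at most $2^{-P}$ avoiding each closed set in turn, and uses completeness of $\Rset^m$ to obtain a limit point missing all of them. The only cosmetic difference is that the paper phrases this directly (showing $U \setminus \bigcup_P J_P \neq \varnothing$ for any nonempty open $U$) rather than by contradiction, and works with open balls rather than closed ones.
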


\begin{proof}
\newcommand{\Ball}{\mathrm{B}}
  Let $J _0$, $J _1$, \ldots be closed subsets of $J$ with empty interior,
  and $U$ be any nonempty open subset of $J$. 
  We will show that $U \setminus \bigcup _{P \in \Nset} J _P$ is nonempty. 
  For each $P \in \Nset$, 
  we take $x _P \in \Rset ^m$ and $\varepsilon _P \in \Rset$ as follows. 
  Write $
 \Ball (x, \varepsilon) 
  $ for the open set of points in $J$
  whose distance from $x$ is less than $\varepsilon$. 
  Let $x _0 \in U$ and $\varepsilon _0 \in (0, 1)$ be 
  such that $\Ball (x _0, \varepsilon _0) \subseteq U$. 
  For each $P \in \Nset$, 
  let $x _{P + 1} \in U$ and $
\varepsilon _{P + 1} \in (0, 2 ^{-P - 1})
  $ be such that $
 \Ball (x _{P + 1}, \varepsilon _{P + 1}) 
\subseteq
 \Ball (x _P, \varepsilon _P) \setminus J _P
  $. 
  This is possible because 
  $\Ball (x _P, \varepsilon _P) \setminus J _P$ is 
  open and nonempty, 
  since $J _P$ is closed and has empty interior. 
  As $P$ tends to infinity, 
  $x _P$ converges to 
  a point in $U \setminus \bigcup _{P \in \Nset} J _P$. 
\end{proof}

  The proof of Theorem~\ref{theorem: 0607111001} also uses the 
  following 
\emph{Identity Theorem}
  (for real analytic functions of several variables), 
  also known as the 
\emph{Principle of Analytic Continuation}. 
  It can be proved by 
  straightforwardly generalizing
  the same assertion for unary functions~%
\cite[Section~1.2]{krantz02:_primer_of_real_analy_funct}. 

\begin{theorem}
 \label{theorem: analytic extension}
  An analytic function with open connected domain 
  that vanishes on an open set vanishes everywhere. 
\end{theorem}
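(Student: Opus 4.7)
The plan is to reduce the multivariable Identity Theorem to a standard clopen-connectedness argument, mirroring the unary case cited from Krantz--Parks. Let $f \pcolon \Rset ^m \pto \Rset$ be analytic with open connected domain $U$, and suppose $f$ vanishes on a nonempty open set $V \subseteq U$. Introduce the set
\begin{equation}
 A = \{\, x \in U \mid \D ^a f \ap x = 0 \text{ for every } a \in \Nset ^m \,\}.
\end{equation}
The strategy is to show that $A$ is nonempty, closed in $U$, and open in $U$; by connectedness this forces $A = U$, which in particular gives $f = 0$ throughout $U$.

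The first two properties are routine. Nonemptiness: on $V$ we have $f \equiv 0$, so every partial derivative also vanishes identically on $V$, whence $V \subseteq A$. Closedness in $U$: since analyticity of $f$ implies that each $\D ^a f$ is continuous on $U$, the set $\{\D ^a f = 0\}$ is relatively closed in $U$ for every $a$, and $A$ is the intersection of these countably many closed sets.

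The main work, and the only place where the passage from one variable to several requires care, is showing that $A$ is open in $U$. Given $x _0 \in A$, use the definition of analyticity at $x _0$ to choose a neighborhood $J \subseteq U$ of $x _0$ on which $f$ equals a convergent power series as in \eqref{eq: 0702171515}. The coefficients of that series are, up to the usual factorial normalization, the values $\D ^a f \ap x _0$, which are all zero because $x _0 \in A$. Hence $f \equiv 0$ on $J$, and therefore $\D ^a f \equiv 0$ on $J$ for every $a$, giving $J \subseteq A$. The only subtlety here is identifying the power series coefficients with the mixed partial derivatives at the center; this is the multivariate Taylor-coefficient identity, which follows from termwise differentiation inside the region of absolute convergence guaranteed by the analyticity condition.

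Once these three properties are in hand, connectedness of $U$ yields $A = U$, and in particular $f$ vanishes on all of $U$. The principal obstacle, as noted, is the openness step: everything else is a direct transcription of the one-variable proof, but openness requires the multivariate Taylor identification, which is precisely the ingredient Krantz--Parks supply and which the statement of the theorem lets us invoke without redevelopment.
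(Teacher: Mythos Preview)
Your argument is correct and is precisely the standard clopen argument that constitutes the ``straightforward generalization'' of the unary case from Krantz--Parks \cite[Section~1.2]{krantz02:_primer_of_real_analy_funct} that the paper alludes to; the paper itself does not spell out a proof beyond that reference. One small remark: the paper's analytic functions take values in $\Rset^n$, whereas you wrote $f \pcolon \Rset^m \pto \Rset$; this is harmless since your argument applies componentwise.
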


  Let $J \subseteq \Rset$ be an open interval. 
  It is well known that 
  functions $u _0$, \dots, $u _{k - 1} \in \Analytic [J]$ 
  are linearly dependent if and only if
  the determinant $
   \bigl\lvert {(\D ^i u _j) _{i, j = 0, \dots, k - 1}} \bigr\rvert
  $, called their
\emph{Wronskian}, 
  is zero. 
  Using this fact, 
  Ritt and Gourin~%
\cite{ritt27:_assem_theor_proof_exist_trans_trans_funct} 
  showed 
  (\ref{enumi: Z line}$_\Rset$) $\Rightarrow$ (\ref{enumi: Z line}) 
  of Theorem~\ref{theorem: 0607111001}. 

\begin{theorem}
 \label{theorem: ritt and gourin}
  Let $J \subseteq \Rset$ be an open interval and let $f \in \Analytic [J]$.
  If we have
\begin{equation}
 \label{eq: 0702172230}
 P \ap (
  f, 
  \D ^1 f, 
  \D ^2 f, 
  \ldots, 
  \D ^{\arity P - 1} f
 ) 
=
 0 
\end{equation}
  for some $\Rset$-coefficient nonzero polynomial~$P$, 
  then we have 
  \eqref{eq: 0702172230}
  for some $\Zset$-coefficient nonzero polynomial~$P$. 
\end{theorem}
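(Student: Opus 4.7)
The plan is to apply the Wronskian criterion stated above twice: once in $\Analytic[J]$, to convert the given real-coefficient relation into an identically-zero determinant on $J$, and once in a formal differential polynomial ring, to recognise that same determinant as a nonzero polynomial with integer coefficients.

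First, write $P(Y _0, \ldots, Y _N) = \sum _{\alpha \in S} c _\alpha Y ^\alpha$, where $S = \{\alpha _1, \ldots, \alpha _k\}$ is the finite set of multi-indices with $c _\alpha \neq 0$ and $N = \arity P - 1$. Set $m _{\alpha _j} = \prod _i (\D ^i f) ^{(\alpha _j) _i} \in \Analytic[J]$, so the hypothesis reads $\sum _j c _{\alpha _j} m _{\alpha _j} = 0$. Hence the $k$ functions $m _{\alpha _1}, \ldots, m _{\alpha _k}$ are $\Rset$-linearly dependent on $J$, and so their Wronskian $\det(\D ^i m _{\alpha _j}) _{0 \leq i, j \leq k - 1}$ vanishes identically on $J$.

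Next, introduce the differential polynomial ring $\Zset[Y _0, Y _1, \ldots]$ equipped with the derivation $\partial$ defined by $\partial Y _i = Y _{i+1}$. The evaluation map $Y _i \mapsto \D ^i f$ is a ring homomorphism intertwining $\partial$ with $\D$, so the formal Wronskian $W := \det(\partial ^i Y ^{\alpha _j}) _{0 \leq i, j \leq k - 1}$, which lies in $\Zset[Y _0, \ldots, Y _{N + k - 1}]$, evaluates to the analytic Wronskian of the $m _{\alpha _j}$. Combining with the first step yields $W(f, \D f, \ldots, \D ^{N + k - 1} f) = 0$ on $J$.

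The remaining task, which I expect to be the main obstacle, is to verify that $W$ is not the zero polynomial. For this I would apply the Wronskian criterion inside the differential field $\Qset(Y _0, Y _1, \ldots)$: the $k$ distinct monomials $Y ^{\alpha _j}$ are visibly $\Qset$-linearly independent there, so $W \neq 0$ as soon as the field of $\partial$-constants is exactly $\Qset$. I would justify the latter by a degree argument: if $\partial(p / q) = 0$ with $p, q$ coprime polynomials in the $Y _i$'s, then $q$ divides $\partial q$; but if $Y _n$ is the largest-indexed variable occurring in $q$, then the coefficient of $Y _{n+1}$ in $\partial q$ is $\partial q / \partial Y _n$, which has strictly smaller $Y _n$-degree than $q$. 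Thus $q \mid \partial q$ forces $q \in \Qset$, and then $\partial p = 0$ likewise forces $p \in \Qset$. Granted this, $W$ is a nonzero $\Zset$-coefficient polynomial annihilating $(f, \D f, \ldots, \D ^{N + k - 1} f)$, yielding the desired integer-coefficient differential equation.
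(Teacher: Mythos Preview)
Your argument is correct and follows exactly the paper's approach: the Wronskian of the monomials appearing in $P$ is a $\Zset$-coefficient differential polynomial that vanishes on $J$, and one then checks it is not the zero polynomial. The only difference is that the paper dispatches this last check in one line (were it zero, the monomials would be linearly dependent for \emph{every} analytic $f$, which is absurd), whereas you spell out the formal differential-algebra version via the constants of $\Qset(Y_0,Y_1,\ldots)$; both are fine.
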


\begin{proof}
  By the assumption, 
  there is a finite set $
B \subseteq \Nset ^{\arity P}
  $ such that the functions
\begin{equation}
 \label{eq: 0605281357}
          f ^{\nu _0} 
    \cdot (\D f) ^{\nu _1} 
    \cdots
    (\D ^{\arity P - 1} f) ^{\nu _{\arity P - 1}}, 
 \qquad
  \text{for} \ (\nu _0, \dots, \nu _{\arity P - 1}) \in B, 
\end{equation}
  are linearly dependent. 
  The Wronskian of \eqref{eq: 0605281357} thus vanishes, 
  which is a $\Zset$-coefficient polynomial 
  in $f$, $\D ^1 f$, \dots, $\D ^{\arity P + \lvert B \rvert - 1} f$. 
  This polynomial is nonzero, since 
  otherwise \eqref{eq: 0605281357} would be 
  linearly dependent for arbitrary $
f
  $, which is absurd. 
\end{proof}

  One direction of Lemma~\ref{lemma: 0702130051} uses the following
\emph{Transcendence Degree Theorem}. 

\begin{theorem}
 \label{theorem: dimension}
  Let $F$ be a subfield of a field~$E$ and 
  $D$ be a subset of $E$. 
  If $D \subseteq \overline{F (B)}$ 
  for some finite set $B \subseteq E$, 
  then 
  $D \subseteq \overline{F (C)}$
  for some finite set $C \subseteq D$. 
\end{theorem}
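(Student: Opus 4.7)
The plan is to invoke the standard theory of transcendence degree for field extensions. First, observe that since $B$ is finite, the transcendence degree of $F(B)$ over $F$ is bounded by $\lvert B \rvert$, hence finite. The hypothesis $D \subseteq \overline{F(B)}$ gives $F(D) \subseteq \overline{F(B)}$, and since passing to the algebraic closure does not change transcendence degree, the transcendence degree of $F(D)$ over $F$ is at most $\lvert B \rvert$; in particular it is finite.

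Next, I would pick $C$ to be a maximal algebraically independent (over $F$) subset of $D$. Such a $C$ exists by a Zorn's lemma argument (or, since we already know the transcendence degree of $F(D)/F$ is finite, by a straightforward greedy extension). The cardinality of any algebraically independent subset of $F(D)$ over $F$ is bounded by the transcendence degree of $F(D)/F$, so $C$ is finite.

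The final step is to show that this $C$ works, i.e., $D \subseteq \overline{F(C)}$. If some $d \in D$ were transcendental over $F(C)$, then $C \cup \{d\}$ would be an algebraically independent subset of $D$ strictly containing $C$, contradicting the maximality of $C$. Therefore every element of $D$ is algebraic over $F(C)$, which is exactly the desired inclusion.

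There is no real obstacle here; the result is essentially the statement that a transcendence basis of $F(D)$ over $F$ can be chosen from within $D$ itself, combined with the bound on its cardinality coming from $D \subseteq \overline{F(B)}$. The only mild subtlety is ensuring that $C$ is a subset of $D$ rather than merely of $F(D)$, which is handled by selecting $C$ as a maximal $F$-algebraically independent subset \emph{of $D$}, rather than first fixing a transcendence basis of $F(D)/F$ inside $F(D)$.
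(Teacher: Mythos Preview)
Your argument is correct. It differs from the paper's proof in that you invoke the standard theory of transcendence degree as a black box: the finiteness of $B$ bounds the transcendence degree of $F(D)/F$, and then a maximal $F$-algebraically independent subset $C\subseteq D$ is finite and satisfies $D\subseteq\overline{F(C)}$. The paper instead carries out an explicit Steinitz-style exchange: starting from $B$, it locates an element $b\in B\setminus D$ that actually occurs in the algebraic relation witnessing some $d_0\in D\subseteq\overline{F(B)}$, solves that relation for $b$ to see $b\in\overline{F(B')}$ with $B'=(B\setminus\{b\})\cup\{d_0\}$, checks that then $D\subseteq\overline{F(B')}$, and iterates until the generating set lies inside $D$. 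Your route is shorter and more conceptual for a reader who already knows transcendence bases; the paper's route is self-contained and avoids appealing to that machinery, which fits its aim of keeping the appendix elementary.
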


\begin{proof}
  For each $d \in D$, 
  the assumption gives
\begin{equation}
 \label{eq: 0702122314}
 d ^l = \sum _{j = 0} ^{l - 1} \beta _j \cdot d ^j
\end{equation}
  for some $l \in \Nset \setminus \{0\}$ and 
  $\beta _j \in F (B)$. 
  Suppose that for some $d = d _0 \in D$, 
  this equation contains
  some $b \in B \setminus D$, 
  since otherwise we are done. 
  Then we can rewrite \eqref{eq: 0702122314} as
\begin{equation}
 \label{eq: 0702122313}
 b ^k = \sum _{i = 0} ^{k - 1} \alpha _i \cdot b ^i
\end{equation}
  for some $k \in \Nset \setminus \{0\}$ and 
  $\alpha _i \in F (B')$, 
  where $B' = (B \setminus \{b\}) \cup \{d _0\}$. 

  For each $d \in D$ and $t \in \Nset$, 
  we can substitute \eqref{eq: 0702122314} and \eqref{eq: 0702122313}
  repeatedly in $d ^t$ to write
\begin{equation}
 d ^t = \sum _{i = 0} ^{k - 1} \sum _{j = 0} ^{l - 1} 
  \gamma _{i, j} \cdot c ^i \cdot d ^j
\end{equation}
  for some $\gamma _{i, j} \in F (B')$.
  The $k \cdot l + 1$ elements 
  $1$, $d$, $d ^2$, \ldots, $d ^{k \cdot l}$
  are hence linearly dependent over $F (B')$. 
  We have thus found a set~$B'$ with $D \subseteq \overline{F (B')}$ 
  such that $B' \setminus D$ has strictly less elements than $B \setminus D$. 
  Repeat. 
\end{proof}

\section{Maximal unique function}
 \label{section: appendix: unique}

  This section shows that, 
  from a set~$K$ of functions with a certain property, 
  we can choose a function of which all functions in $K$ is a restriction.
  This was used to justify Definition~%
\ref{definition: conservative differential recursion}
  in the presence of non-analytic functions 
  where 
  Theorem~\ref{theorem: preservation under pr}~(\ref{enumi: 0702201533}) 
  does not apply. 

  We say that a set $I \subseteq \Rset$ is 
\emph{$0$-convex} 
  if it is
  either the empty set or a possibly unbounded interval containing $0$.
  Note that the union of $0$-convex sets is $0$-convex. 

  We say that a set~$K$ of functions from $\Rset$ is 
\emph{consistent} 
  if for any $t \in \Rset$, 
  the set
  $\{\, g \ap t \mid g \in K \,\}$ has at most one element. 
  In this case, the 
\emph{union}
  of $K$ means the unique function $k$ such that 
  $\dom h = \bigcup _{g \in K} \dom g$ and 
  for each $t \in \dom h$, there is some $g \in K$ with $h \ap t = g \ap t$. 

\begin{lemma}
 \label{lemma: 0604272255}
  Let $H$ be a set of functions from $\Rset$ 
  with $0$-convex domain. 
  Then the set $K$ of functions unique in $H$ is consistent. 
  Moreover, if its union belongs to $H$, it belongs to $K$. 
\end{lemma}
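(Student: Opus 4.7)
The plan is to verify the two assertions of the lemma in sequence; each follows quickly from unfolding the definition of ``unique in $H$.'' The $0$-convexity hypothesis plays no direct role in the core argument, beyond ensuring that the domain of the union---which is automatically a union of $0$-convex sets and hence $0$-convex---has the right shape, so that membership of the union in $H$ is not ruled out on trivial grounds.

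First I would show that $K$ is consistent. Suppose $g _0, g _1 \in K$ and $t \in \dom g _0 \cap \dom g _1$. Since $g _1 \in H$ and $g _0$ is unique in $H$, the restriction of $g _1$ to $\dom g _0$ must be a restriction of $g _0$; evaluating at $t$ gives $g _0 \ap t = g _1 \ap t$. Thus $\{\, g \ap t \mid g \in K \,\}$ has at most one element for every $t$, and $K$ is consistent.

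For the second assertion, let $k$ be the union of $K$, which is well defined by the consistency just proved, and suppose $k \in H$. To show $k \in K$, pick any $h \in H$ and any $t \in \dom h \cap \dom k$. By definition of the union there is some $g \in K$ with $t \in \dom g$ and $k \ap t = g \ap t$. Since $g$ is unique in $H$ and $h \in H$, the restriction of $h$ to $\dom g$ is a restriction of $g$, so $h \ap t = g \ap t = k \ap t$. Hence the restriction of $h$ to $\dom k$ is a restriction of $k$, i.e., $k$ is unique in $H$, so $k \in K$.

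I do not foresee any real obstacle. The one point to watch is not conflating consistency---agreement on common values within $K$---with uniqueness, which demands agreement with \emph{every} function in $H$ and not merely those in $K$; this is exactly why in the second part we must invoke the uniqueness of the intermediate $g \in K$ rather than of $k$ itself.
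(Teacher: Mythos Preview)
Your proof is correct and follows essentially the same approach as the paper's: both parts hinge on applying the uniqueness (in $H$) of some element of $K$ to another function in $H$, exactly as you do. The paper phrases both arguments by contradiction while you argue directly, but the underlying reasoning is identical.
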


\begin{proof}
  For the first claim, suppose otherwise. 
  Then there are functions $k _0$, $k _1 \in K$ and 
  $t \in \dom k _0 \cap \dom k _1$ such that $k _0 \ap t \neq k _1 \ap t$. 
  This contradicts the fact that $k _0$ is unique in $H$. 

  For the second claim, 
  suppose that the union~$k$ of $K$ is not unique in $H$. 
  That is, there are a function $g \in H$ and $t \in \dom k \cap \dom g$ 
  such that $g \ap t \neq k \ap t$. 
  There is $k _0 \in K$ for which $t \in \dom k _0$. 
  We have $k _0 \ap t = k \ap t \neq g \ap t$, 
  contradicting the fact that $k _0$ is unique in $H$. 
\end{proof}

  This lemma can be applied to $H = H _{\vec v}$ in 
  the situation of Definition~%
\ref{definition: conservative differential recursion}, 
  because there the union of any consistent subset of $H _{\vec v}$ 
  belongs to $H _{\vec v}$. 

\section{Iteration}
 \label{section: appendix: iteration}

  As we noted, the definition~%
\eqref{eq: definition of minimization}
  of the operator~$\MN$ is ambiguous, 
  as it contains a subexpression $f \ap (v, t)$ that 
  may be undefined for some $(v, t)$. 
  So when is $\MN f \ap v$ defined? 
  Possible answers include: 
\begin{enumerate}
 \item 
  When $t ^+$ and $t ^-$ are defined. 
 \item \label{enumi: 0610211654}
  When at least either $t ^+$ or $t ^-$ is defined; 
  the condition $t ^+ < -t ^-$ will be used 
  only when both are defined. 
\end{enumerate}
  And when is $t ^+$ (resp.\ $t ^-$) defined? 
  Possible answers include: 
\begin{enumerate} \def\theenumi{\roman{enumi}}
 \item \label{enumi: 0610211748}
  When there is $t \geq 0$ (resp.\ $\leq 0$) such that 
  $f \ap (v, t) = 0$ and
  $(v, \tau) \in \dom f$ for all $\tau \in \Rset$. 
 \item \label{enumi: 0610211655}
  When there is $t \geq 0$ (resp.\ $\leq 0$) such that 
  $f \ap (v, t) = 0$ and 
  $(v, \tau) \in \dom f$ for all $\tau \in [-t, t]$ (resp.\ $[t, -t]$).
 \item \label{enumi: 0610211749}
  When there is $t \geq 0$ (resp.\ $\leq 0$) such that 
  $f \ap (v, t) = 0$ and 
  $(v, \tau) \in \dom f$ for all $\tau \in [0, t]$ (resp.\ $[t, 0]$).
 \item 
  When there is $t \geq 0$ (resp.\ $\leq 0$) such that 
  $f \ap (v, t) = 0$. 
\end{enumerate}
  For (\ref{enumi: 0610211748}), (\ref{enumi: 0610211655}) 
  and (\ref{enumi: 0610211749}), 
  we may also consider adding the phrase
  ``except for some countably many isolated $\tau$''
  (compare (\ref{enumi: integrand defined}$'$) in 
  Section~\ref{subsection: campagnolo}).

  Moore's informal explanation
  by a programming language 
  \cite[Section~7]{moore96:_recur_theor_reals_contin_comput}
  seems to suggest (\ref{enumi: 0610211654}) and 
  (\ref{enumi: 0610211655}). 
  However, 
  without discussing which is the ``right'' definition of $\MN$, 
  we show that, whichever we choose, 
  Lemma~\ref{lemma: iteration preserves recursiveness} holds. 
  The following proof is 
  consistent with any of the above $2 \times 7$ possible definitions. 

\begin{figure}
\begin{center}
 \includegraphics{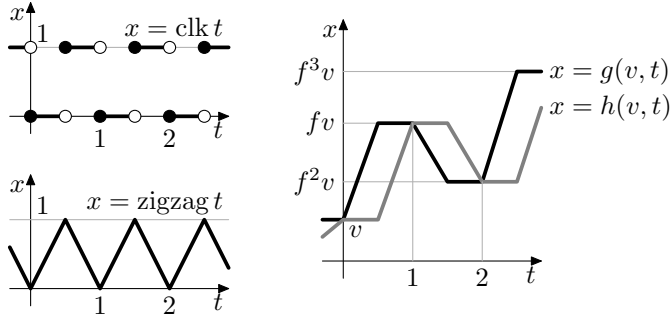}
\end{center}
\caption{Simulating iteration $f ^k \ap \vec v$ by real recursive functions.}
\label{figure: iteration}
\end{figure}

\begin{proof}[Proof of Lemma~\ref{lemma: iteration preserves recursiveness}]
  Denote $\MN f \ap \vec v$ by $\mu t \ldotp f \ap (\vec v, t)$. 
  Let
\begin{align}
 \label{eq: zero test}
  \mathrm{zero?} \ap x & = \mu y \ldotp (x ^2 + y ^2) \cdot (1 - y),
\\
  \mathrm{integer?} \ap x & = \mathrm{zero?} \ap \bigl( \sin {(\pi \cdot x)} \bigr),
 \\
 \label{eq: 0607142300}
  \mathrm{round} \ap x & = x - \mu r \ldotp \mathrm{integer?} \ap (x - r),
\end{align}
  so that \eqref{eq: 0607142300} is
  the unique integer in $(x - 1 / 2, x + 1 / 2]$. 
  We get $\overline{\mathrm{inv}}$
  of \eqref{eq: 0607150051} by 
\begin{equation}
 \overline{\mathrm{inv}} \ap x = \mu t \ldotp x \cdot (x \cdot t - 1). 
\end{equation}
  The above 
  four 
  functions are total. 
  Let 
\begin{equation}
  \mathrm{digit} \ap (x, b, i) = 
   \mathrm{round} \left( \frac x {b ^i} - \frac 1 2 \right) - 
   b \cdot \mathrm{round} \left( \frac x {b ^{i + 1}} - \frac 1 2 \right)
\end{equation}
  for $b > 0$, 
  where $b ^i = \mathrm{exp} \ap (i \cdot \ln b)$. 
  When $b > 1$ and $i$ are integers, 
  $\mathrm{digit} \ap (x, b, i)$ is the digit in 
  $b ^i$'s place when $x$ is written in base-$b$ notation. 
  Define 
\begin{gather}
   \mathrm{clk} \ap t
  = 
   \mathrm{digit} \ap (t, 2, -1),
 \qquad
   \mathrm{zigzag} \ap t 
  = 
   0 + \int _0 ^t (2 - 4 \cdot \mathrm{clk} \ap \tau) \, \d \tau,
\\
 \label{eq: 0612132308}
\begin{pmatrix}
  g \ap (\vec v, t)
 \\
  h \ap (\vec v, t)
\end{pmatrix} 
%\\
 =
  \begin{pmatrix} \vec v \\ \vec v \end{pmatrix}
 +
  \int _0 ^t
\begin{pmatrix}
  2 \cdot 
  (1 - \mathrm{clk} \ap \tau) \cdot
   \bigl( 
     f \ap \bigl( h (v, \tau) - \mathrm{clk} \ap \tau \cdot (h \ap (v, \tau) - v) \bigr)
    -
     h \ap (\vec v, \tau) 
   \bigr) 
 \\
     2 
    \cdot 
     \mathrm{clk} \ap \tau
    \cdot 
     \bigl( h \ap (\vec v, \tau) - g \ap (\vec v, \tau) \bigr) 
    \cdot 
     \overline{\mathrm{inv}} \ap (\mathrm{zigzag} \ap \tau) 
\end{pmatrix}
  \, \d \tau, 
\end{gather}
  as in Figure~\ref{figure: iteration}.
  We have $f ^k \ap \vec v = g \ap (\vec v, k - 1 / 2)$ 
  for $k \in \Nset \setminus \{0\}$. 
\end{proof}

  Note that $
\mathrm{clk} \ap \tau \cdot (h \ap (v, \tau) - v) 
  $ in \eqref{eq: 0612132308}
  cannot be dropped, 
  because of Principle~\ref{principle: call-by-value}. 

\end{document}